\renewcommand\footnotetextcopyrightpermission[1]{} %
\newcommand{\defeq}{\stackrel{\textup{def}}{=}}
\newtheorem{problem}{Problem}
\newtheorem{remark}{Remark}
\newtheorem{assumption}{Assumption}
\newcommand{\tv}{\mathrm{tv}}
\def\abs#1{\left| #1 \right|}
\newcommand{\norm}[1]{\ensuremath{\left\lVert #1 \right\rVert}}
\newcommand{\marginlabel}[1]%
{\mbox{}\marginpar{\it{\raggedleft\hspace{0pt}#1}}}
\newcommand{\veps}{\varepsilon}
\newcommand\calF{\mathcal{F}}
\newcommand\calL{\mathcal{L}}
\newcommand\calM{\mathcal{M}}
\newcommand\bbF{\mathbb{F}}
\def\implies{\Rightarrow}
\def\qedsketch{\ifmmode\Box\else{\unskip\nobreak\hfil
\penalty50\hskip1em\null\nobreak\hfil$\Box$
\parfillskip=0pt\finalhyphendemerits=0\endgraf}\fi}
\newcommand\MH{\mathrm{MH}}
\newcommand\langN{\calL_\Normalize}
\newcommand\langS{\calL_\Stationary}
\newcommand{\restrict}[2]{\left. #1 \right\rvert_{#2}}
\newcommand{\Let}{{\<let>}}
\newcommand{\In}{{\<in>}}
\newcommand{\Case}{{\<case>}}
\newcommand{\Of}{{\<of>}}
\newcommand{\Match}{{\<match>}}
\newcommand{\With}{{\<with>}}
\newcommand{\Return}{{\<return>}}
\newcommand{\Sample}{{\<sample>}}
\newcommand{\Score}{{\<score>}}
\newcommand{\Normalize}{{\<norm>}}
\newcommand{\Stationary}{{\<stat>}}
\newcommand{\Iterate}{{\<Iterate>}}
\newcommand{\Prior}{{\<Prior>}}
\newcommand{\Tracer}{{\<Tracer>}}
\newcommand{\Likelihood}{{\<Lhd>}}
\newcommand{\TypeA}{\mathbb{A}}
\newcommand{\TypeB}{\mathbb{B}}
\global\long\def\Reals{\mathbb{R}}
\global\long\def\Nats{\mathbb{N}}
\global\long\def\NNReals{\Reals_{+}}
\global\long\def\Unit{\mathbf{1}}
\global\long\def\BernoulliDist{\mathrm{Bern}}
\renewcommand{\set}[1]{\ensuremath{\left\lbrace #1 \right\rbrace}}
\newcommand{\set}[1]{\ensuremath{\left\lbrace #1 \right\rbrace}}
\renewcommand{\compiler}{\Phi}
\newcommand{\denote}[1]{\left\llbracket #1 \right\rrbracket}
\newenvironment{denotes}%
{\left\llbracket}{\right\rrbracket}
\newcommand{\defn}[1]{\emph{#1}}
\newcommand{\probabilistic}{probabilistic}
\newcommand{\uprobabilistic}{purely probabilistic}
\begin{document}

\title{Approximations in Probabilistic Programs}
\author{Ekansh Sharma}
\email{ekansh@cs.toronto.edu}
\affiliation{
  \department{Department of Computer Science}
  \institution{University of Toronto}
}
\affiliation{
  \institution{Vector Institute}
}

\author{Daniel M. Roy}
\email{droy@utstat.toronto.edu}
\affiliation{
  \department{Department of Statistical Sciences}
  \institution{University of Toronto}
}
\affiliation{
  \institution{Vector Institute}
}

\begin{abstract}
We study the first-order probabilistic programming language introduced by \citet{SWY+:16}, but with an additional language construct, $\Stationary$, that, 
like the fixpoint operator of \citet{AYC:18}, converts the description of the Markov kernel of an ergodic Markov chain 
into a sample from its unique stationary distribution.
Up to minor changes in how certain error conditions are handled,
we show that $\Normalize$ and $\Score$ are eliminable from the extended language, in the sense of \citet{Fel:91}.
We do so by giving an explicit program transformation and proof of correctness. 
In fact, our program transformation implements a Markov chain Monte Carlo algorithm, 
in the spirit of the ``Trace-MH'' algorithm of \citet{WSG:11} and \citet{GMR+:08}, but less sophisticated to enable analysis. 
We then explore the problem of approximately implementing the semantics of the language with potentially nested $\Stationary$ expressions,
in a language without $\Stationary$.
For a single $\Stationary$ term, the error introduced by the finite unrolling 
proposed by Atkinson et al.\ vanishes only asymptotically.
In the general case, no guarantees exist.
Under uniform ergodicity assumptions, we are able to give quantitative error bounds and convergence results for the approximate implementation of the extended first-order language.
\end{abstract}

\keywords{\ \\ Probabilistic Programming, Nested Markov Chains, Quantitative Error Bounds}  %

\maketitle

\section{Introduction}
\label{sec:intro}
Probabilistic programming languages (PPLs) for Bayesian modelling, like Stan and TensorFlow Probability, provide statisticians and data scientists a formal language to model observed data and latent (i.e., unobserved) variables \citep{tensorflow2015-whitepaper, JSSv076i01,GMR+:08}. In these languages, users specify a ``prior'' probability distribution that represents prior beliefs/assumptions about the data and latent variables. Then conditional statements are introduced to represent actual observed data. The semantics of a program in these probabilistic languages is the ``posterior'' (i.e., conditional) probability distribution, which represents the prior beliefs after being updated (conditioned) on observed data.

\citet{SWY+:16,Sta:17} give precise measure theoretic semantics to an idealized first order probabilistic languages. Beyond the standard deterministic core language, the language provides three constructs for modeling data and latent variables: 
the $\Sample$ construct introduces latent variables whose values are subject to uncertainty;
the $\Score$ construct introduces data into a model in the form of likelihoods;
the $\Normalize$ construct combines the latent variables and data likelihoods to produce the (normalized) posterior distribution, representing updated beliefs given data.

Irrespective of the theoretical prospect of exact implementations of PPL semantics, approximations are ubiquitous in implementations of PPLs that are intended to scale to the demands of real-world Bayesian modelling.
Indeed, in all but the simplest models, approximations necessarily arise because the key operation---computing the posterior distribution from the prior distribution and data---is computationally intractable.
As a result, every scalable PPL implementation 
performs a nontrivial transformation that modifies the semantics of the original program in complex ways by introducing errors that tradeoff with computation.
Often, these transformations are implicit within an inference engine, rather than producing a new program in a formal language.
Since approximation is critical to scalable implementations of PPLs, 
it is of interest to define PPLs in which we can reason about the process of introducing approximations. 
In this work, we take a step in this direction.

Most practical implementations of PPLs use one (or a combination) of two classes of approximate inference methods:
\emph{Markov chain Monte Carlo} (MCMC) methods approximate the posterior via the simulation of a Markov chain whose stationary (target) distribution is the posterior distribution;
whereas, \emph{variational} methods use (possibly stochastic) optimization to search among a family of tractable distribution for one that best approximates the posterior distribution.

In this work, we focus on MCMC methods.
While many existing PPLs are powerful enough to represent the MCMC implementation of any program in their own language, existing systems cannot reason about the error introduced by simulating Markov chains for a finite number of steps.
This challenge grows when we ask how the error scales under composition of multiple approximate programs and nested MCMC approximations.

\subsection{Contributions}

In this paper, we bridge the gap between the semantics of a probabilistic programming language
and that of a Markov chain Monte Carlo implementation of the language,
building on the foundations laid by \citet{SWY+:16,Sta:17}.
We start from their first-order PPL,
whose $\Sample$, $\Score$, and $\Normalize$ constructs permit the specification of Bayesian models and conditioning.
In \cref{sec:semantics}, we append the language proposed by \citet{SWY+:16} to include $\Stationary$ construct, that takes a Markov kernel and returns the unique stationary distribution, if one exists. 
In \cref{sec:equivalence}, using the framework of eliminability \citep{Fel:91},
we show that the language with only $\Sample$ and $\Stationary$ is equivalent to the original language.
In \cref{sec:stationary}, we give an approximate compiler for the $\Stationary$ construct, similar to \citet{AYC:18}. We then identify some semantic constraints on the Markov kernel, given as argument to the $\Stationary$ construct, 
under which we can derive quantitative error bounds.

\section{Related Work}
\label{sec:related}

\citet{AYC:18} propose a language in which users can manually construct inference procedures for probabilistic models. In particular, users can specify (Markov transition) kernels and their stationary distributions using a fixpoint construct, like $\Stationary$. They propose to approximately compute fixpoints by iteration, but leave open the problem of characterizing how this approximation affects the semantics, and whether one can control the approximation error. Our work provides the first such guarantees by exploiting uniform ergodicity.

\citet{HNR+:15,BDGS:16,SKV+:18} introduce Markov Chain Monte Carlo based implementations of distinct but related PPLs. In each case, the authors provide theoretical guarantees, building on standard results in Markov chain theory.
Informally speaking, when a program has the equivalent of a single, top-level $\Normalize$ expression, these results guarantee that the original semantics are recovered asymptotically.
This line of work does not provide a framework for quantifying error. As a result, these results do not bear on the correctness, even asymptotically, of programs with nested $\Normalize$ terms. In \cref{sec:stationary}, we give  assumptions on the underlying Markov chains that allow us to quantify the error  of the approximate inference scheme, even under composition and nesting.

\citet{Rain:18} studies the error associated with Monte Carlo approximations to nested $\Normalize$ terms, assuming that one can produce exact samples from the $\Normalize$ terms. In this work, 
we contend with the approximation error associated with nested Markov chains that are not assumed to have converged to their stationary distribution. 
In \cref{sec:stationary}, 
we show that one can quantify the rate at which nested Markov chains converge, under additional hypotheses. 

In the MCMC literature, \citet{RRS:98,MRS:19} study the convergence of Markov chain with approximate transition kernels. \citet{MRS:19} gives quantitative convergence bounds for approximate Metropolis--Hastings algorithm when the acceptance probability is approximate, e.g., due to the dynamics of the Markov chain being perturbed. In our setting, the perturbation of the Markov chain is caused by the dynamics of nested Markov chains. We posit assumptions of uniform ergodicity on nested Markov chains and give quantitative error bounds. 

\section{Measure-theoretic preliminaries}
\label{sec:preliminaries}

We assume the reader is familiar with the basics of measure and probability theory: $\sigma$-algebras, measurable spaces, measurable functions, random variables, (countably additive) measures, (Lebesgue) integration, conditional probability distributions. 
We review a few definitions here for completeness:
Recall that a measure $\mu$ on a measurable space $(X,\Sigma_X)$ is said to be \defn{finite} when $\mu(X)<\infty$, said to be \defn{S-finite} when $\mu = \sum_{i\in \Nats}\mu_i$, 
where $\mu_i$ is a finite measure for all $i$, and said to be \defn{$\sigma$-finite} when $\mu = \sum_{i\in \Nats}\mu_i$, where $\mu_i$ is a finite measure for all $i$ and $\mu_i,\mu_j$ are \defn{mutually singular} for all $i\neq j$. We say $\mu$ is a \defn{probability measure} when $\mu(X)= 1$ and a \defn{sub-probability measure} when $\mu(X)\leq 1$. We say a property $P$ holds \defn{$\mu$-almost everywhere (or $\mu$-a.e.)} if $\mu(\set{x\colon \neg P(x)}) = 0$. 

Let $(X, \Sigma_X)$ and $(Y, \Sigma_Y)$ be measurable spaces.
Recall that a \defn{kernel (from $X$ to $Y$)} is a function $k\colon X\times\Sigma_{Y} \to \NNReals$ 
such that, for  all $x\in X$, the function $k(x, \cdot)\colon \Sigma_{X} \to [0,\infty]$ is a measure, and, for all $A\in \Sigma_X$, the function $k(\cdot, A)\colon  \to [0,\infty]$ is measurable. 
We say that $k$ is \defn{finite} (resp., \defn{{$\sigma$-finite}}, and \defn{{S-finite}}) if, for all $x$, $k(x, \cdot)$ is a finite (resp., \defn{{$\sigma$-finite}}, and \defn{{S-finite}}) measure.
Similarly, $k$ is a \defn{(sub-)probability kernel} if for all $x$, $k(x, \cdot)$ is a (sub-)probability measure.
In this paper we study approximations to probabilistic semantics and give rates of convergence. 
These notions require metric structure on the space of probabilistic distributions. 
Recall that the \defn{total variation} distance between a pair $\mu,\nu$ of probability distributions on a common measurable space $(X,\Sigma_X)$ is defined to be 
  \[
  \norm{\mu-\nu}_{\tv} = \sup_{A\in \Sigma_{\denote{\TypeA}}}  \abs{\mu(A) - \nu(A)}.
  \]
The total variation distance is a metric.

\subsection{Markov Chain theory}
\label{sec:MCT}
We introduce a new language construct explicitly designed to refer to the long run behavior of repeating the same probabilistic computation over and over. The study of such phenomena falls within the remit of Markov chain theory. We summarize basic definitions and results here.

Fix a basic probability space, and write $P$ for the associated probability measure. All random variables will be defined relative to this structure.
Let $(X, \Sigma_X)$ be a measurable space. A \defn{Markov chain} (with \defn{state space} $X$) is a sequence of $X$-valued random variables $(Y_0, Y_1 , \dots)$  such that, for some probability kernel $k:X\times \Sigma_X \to [0,1]$, with probability one, for all measurable subsets $A \in \Sigma_X$,
  \[
  P(Y_i \in A| Y_0,Y_1,\dots,Y_{i-1}) =  k(Y_{i-1}, A).
  \]  
  The probability kernel $k$ is called the \defn{transition kernel} of the Markov chain. 
  The transition kernel $k$, combined with an initial distribution $\mu = P(Y_0 \in \cdot)$, determines the distribution of the Markov chain.
  Note that our definition is somewhat stronger than the usual abstract definition in terms of conditional independence, though agrees with the usual definition for Borel spaces.

  For notational convenience, define $k^n:X\times\Sigma_X \to [0,1]$ by
  \[
  k^n(x, \cdot) \defeq \int_X k(y, \cdot)k^{n-1}(x, dy).
  \]
  The (probability) distribution (or law) of the random variable $Y_n$, 
  denoted $\calL(Y_n)$, satisfies
  \[
  \calL(Y_n)(\cdot) = \int k^n(x,\cdot) \mu(dx).
  \]
We say that a probability measure $\pi$ is \defn{invariant} with respect to $k$ when
  \[
  \norm{\pi(\cdot) - \int_X k(x, \cdot) \pi(dx)}_\tv = 0.
  \]
A Markov chain is \defn{ergodic} if it admits a unique invariant distribution, $\pi$. 
Recall that $\mu$ is \defn{absolutely continuous} with respect to $\pi$  (equivalently, $\pi$ dominates $\mu$, or $\pi \gg \mu$), when, for all $E \in \Sigma_X$, $\pi(E) = 0$ implies $\mu(E)=0$.
When $\mu$ is absolutely continuous with respect to $\pi$, then
  \begin{equation}
  \label{eq:ergodic}
\norm{\int_X k^n(x, \cdot) \mu(dx) - \pi(\cdot)}_\tv \to 0 \qquad \text{as $n \to \infty$.}      
  \end{equation}

In this case, $\pi$ is called the \defn{stationary distribution}.
The Markov chain is \defn{uniformly ergodic} if there exist constants $C\in \NNReals, \rho\in (0,1)$ such that
\begin{equation}
\label{eq:unifergodic}
 \sup_{x \in X} \, \norm{k^n(x, \cdot) - \pi(\cdot)}_\tv \leq C\rho^n.    
\end{equation}

\subsection{The Metropolis--Hastings algorithm}

The Metropolis--Hastings (MH) algorithm is a framework for constructing Markov chains with a specified (``target'') stationary distribution. 
In particular, given a probability kernel $Q$ from a measurable space $(X, \Sigma_X)$ to itself, called the \defn{proposal kernel},
the MH algorithm describes how to modify the proposal kernel in order to get a new kernel that defines a Markov chain with the desired stationary distribution. 
In particular, the \defn{Metropolis--Hastings kernel} proposes to transition from state $x \in X$ 
by sampling a potential state transition $y$ from $Q(x,\cdot)$.
The kernel accepts the transition with probability $\alpha(x,y)$, where $\alpha\colon X\times X\to [0,1]$ is a carefully crafted measurable function. If the potential transition is rejected, the algorithm instead chooses the self-transition from $x$ to itself.
The Metropolis--Hastings kernel leaves $\pi$ invariant if and only if, for all bounded measurable functions $f$, 
\begin{equation}
  \label{eq:detailedbalance}
  \begin{aligned}
  &\iint f(x,y) \pi(dx)Q(x, dy)  \\
  & =\iint f(x,y) \pi(dy)Q(y, dx)\alpha(y,x).
  \end{aligned}
\end{equation}
The above equation is known as \defn{detailed balance}. 
It implies that the Markov chain is time reversible.
 
\begin{theorem}[Radon--Nikodym for kernels] \label{rnthm}
Let $(X, \Sigma_X)$ be a measurable space where $\Sigma_X$ is Borel, $\mu\colon X\times \Sigma_X\to [0,\infty]$ be an $\sigma$-finite kernel, and $\nu\colon X\times\Sigma_X$ be a $\sigma$-finite kernel. If $\mu$ is absolutely continuous with respect to $\nu$,  then there exists a measurable function $f\colon X\times X\to [0,\infty)$ such that, for all $x\in X$ and $A\in \Sigma_{X}$,
\[
  \mu(x, A) = \int_{A}f(x, y)\nu(dy).
\] 
Furthermore, $f$ uniquely defined up to a $\nu$-null set, i.e., if $\mu(x,A) = \int_A g \,d\nu$ for all $A \in \Sigma_{X}$, then $f=g$ $\nu$-almost everywhere.
\end{theorem}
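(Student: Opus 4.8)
The plan is to read this as a Radon--Nikodym theorem ``with a parameter.'' For each fixed $x$ the hypothesis $\mu(x,\cdot)\ll\nu(x,\cdot)$ together with the classical Radon--Nikodym theorem already produces a density $f_x\defeq d\mu(x,\cdot)/d\nu(x,\cdot)$, unique up to a $\nu(x,\cdot)$-null set. The entire content of the statement is therefore (i) that these pointwise densities can be chosen to fit together into a single function $f$ that is \emph{jointly} measurable in $(x,y)$, and (ii) that one version works simultaneously for every $x$. I would not attempt to glue the abstract objects $f_x$ together directly; instead I would exhibit one explicitly measurable candidate and verify afterwards that it is a density for each $x$.

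Assume first that $\mu$ and $\nu$ are finite kernels; I return to the $\sigma$-finite reduction at the end. Since $\Sigma_X$ is Borel, hence countably generated, I would fix a countable generating algebra and let $\calP_1,\calP_2,\dots$ be the associated increasing sequence of finite partitions into atoms, so that $\sigma\paren{\bigcup_n\calP_n}=\Sigma_X$. Define the approximants
\[
f_n(x,y)\defeq\sum_{B\in\calP_n,\ \nu(x,B)>0}\frac{\mu(x,B)}{\nu(x,B)}\,\ones_B(y).
\]
Each $f_n$ is jointly measurable: the maps $x\mapsto\mu(x,B)$ and $x\mapsto\nu(x,B)$ are measurable because $\mu$ and $\nu$ are kernels, $y\mapsto\ones_B(y)$ is measurable, and the sum over $\calP_n$ is finite. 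I would then set $f(x,y)\defeq\limsup_n f_n(x,y)$ wherever this is finite and $f(x,y)\defeq 0$ on the (jointly measurable) set where it is infinite; as a $\limsup$ of jointly measurable functions, modified on a measurable set, $f$ is jointly measurable with values in $[0,\infty)$.

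The core step is to check that, for each fixed $x$, this $f(x,\cdot)$ is a version of $f_x$. Freezing $x$ and working under the finite measure $\nu(x,\cdot)$, one recognizes $f_n(x,\cdot)$ as the conditional expectation $\av[f_x\mid\calP_n]$, so that $\paren{f_n(x,\cdot)}_n$ is a nonnegative martingale along the filtration $(\calP_n)$. By the martingale convergence theorem it converges $\nu(x,\cdot)$-almost everywhere to $\av[f_x\mid\Sigma_X]=f_x$, using that $\bigcup_n\calP_n$ generates $\Sigma_X$. Hence $f(x,\cdot)=f_x$ off a $\nu(x,\cdot)$-null set, which gives $\mu(x,A)=\int_A f(x,y)\,\nu(x,dy)$ for all $A\in\Sigma_X$ (reading the stated $\nu(dy)$ as $\nu(x,dy)$). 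The uniqueness claim then transfers immediately from the uniqueness in the classical theorem applied at each $x$: any competing density $g$ satisfies $g(x,\cdot)=f_x=f(x,\cdot)$ $\nu(x,\cdot)$-almost everywhere.

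The step I expect to be the main obstacle is the reduction from $\sigma$-finite to finite kernels, not the measurability argument itself. The martingale identity $f_n(x,\cdot)=\av[f_x\mid\calP_n]$ and the integrability $f_x\in L^1(\nu(x,\cdot))$ both rely on $\nu(x,\cdot)$ being finite, whereas in the $\sigma$-finite regime an atom $B$ can have $\nu(x,B)=\infty$. My remedy would be to find a strictly positive, jointly measurable weight $w\colon X\times X\to(0,\infty)$ with $\int w(x,y)\,\nu(x,dy)<\infty$ and $\int w(x,y)\,\mu(x,dy)<\infty$ for every $x$; then $w\nu$ and $w\mu$ are finite kernels, the construction above yields a jointly measurable density for the pair $(w\mu,w\nu)$, and since $w>0$ the chain rule for Radon--Nikodym derivatives shows this density already equals $d\mu(x,\cdot)/d\nu(x,\cdot)$. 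The delicate point is producing $w$ \emph{jointly} measurably, which amounts to a measurable version of the $\sigma$-finite exhaustion of $\nu(x,\cdot)$ and $\mu(x,\cdot)$; I would attack it through the countable generating algebra --- for instance, over any fixed disjoint generating family $\set{D_k}$ on whose members both kernels are finite, the explicit choice $w(x,y)=\sum_k 2^{-k}\ones_{D_k}(y)\big/\paren{1+\nu(x,D_k)+\mu(x,D_k)}$ is jointly measurable, strictly positive, and has $\nu$- and $\mu$-integral at most $1$ --- with the residual work being a measurable selection of such a family in the general case where no fixed one is available.
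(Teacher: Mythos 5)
The paper never actually proves this theorem: it appears in the preliminaries as a known result, with no argument in the body or appendices, so your attempt can only be judged on its own merits. On those merits, your finite-kernel core is correct and is the standard construction (Doob's martingale method): the approximants $f_n$ built from an increasing sequence of finite partitions of a countably generated $\sigma$-algebra are jointly measurable; for each fixed $x$ they form a closed nonnegative martingale under $\nu(x,\cdot)$ converging $\nu(x,\cdot)$-a.e.\ to the classical density $f_x$; and $f=\limsup_n f_n$ (set to $0$ where infinite) is a single jointly measurable version valid for every $x$, with uniqueness transferring pointwise from the classical theorem.

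The gap is exactly where you place it, and it is genuine rather than ``residual work,'' because the paper's definition of a $\sigma$-finite kernel is pointwise: for each $x$, $\nu(x,\cdot)$ is $\sigma$-finite, with the exhausting sets allowed to depend on $x$. Your explicit weight $w(x,y)=\sum_k 2^{-k}\ones_{D_k}(y)\big/\paren{1+\nu(x,D_k)+\mu(x,D_k)}$ presupposes one family $\set{D_k}$ with $\nu(x,D_k)<\infty$ for \emph{all} $x$, i.e.\ uniform $\sigma$-finiteness, a strictly stronger hypothesis. The pointwise case cannot be reached by the same partition machinery: a $\sigma$-finite measure can be locally infinite everywhere (e.g.\ $\sum_m m\,\delta_{q_m}$ with $(q_m)$ enumerating the rationals), so $\nu(x,\cdot)$ is not determined by its values on a countable generating algebra, continuity from above fails, and even the atom map $(x,y)\mapsto\nu(x,\set{y})$ is no longer obviously jointly measurable; the ``measurable selection'' you defer is thus the real mathematical content of the $\sigma$-finite extension. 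A workable repair, sufficient for every application of the theorem in this paper, is to use the fact (noted in the Remark following the $\Tracer$ typing rule, citing Kallenberg) that the kernels arising here are S-finite: write $\nu=\sum_n\nu_n$ and $\mu=\sum_n\mu_n$ with finite kernels, form the finite kernel $\lambda(x,\cdot)=\sum_n 2^{-n}\bigl(\nu_n(x,\cdot)/(1+\nu_n(x,X))+\mu_n(x,\cdot)/(1+\mu_n(x,X))\bigr)$, apply your finite-case theorem to get jointly measurable densities $g=d\nu/d\lambda$ and $h=d\mu/d\lambda$ as sums of the pieces' densities, and check that $f=(h/g)\ones_{\set{g>0}}$ is the desired density, with pointwise $\sigma$-finiteness guaranteeing $g,h<\infty$ $\lambda$-a.e.\ so that $f$ can be taken finite-valued. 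Without some such added structure (S-finiteness or uniform $\sigma$-finiteness), your argument does not establish the theorem as stated.
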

By the almost everywhere uniqueness, we are justified in referring to $f$ as \emph{the} Radon--Nikodym derivative of $\mu$ with respect to $\nu$.

\begin{theorem}[\citet{Tie:98}] 
  \label{thm:tie98}
  For the Metropolis--Hastings algorithm with proposal kernel $Q$ and target stationary distribution $\pi$,
  assume there exists a measure $\nu$ that dominates  both $\pi$ and $Q(x,\cdot)$, for all $x \in X$.
  Writing $p$ and $q(x,\cdot)$ for the Radon--Nikodym derivatives of $\pi$ and $Q(x,\cdot)$ with respect to $\nu$. 
 Let $r(x,y) = \frac{p(x)q(x,y)}{p(y)q(y,x)}$, let $$R = \set{(x,y) : p(x)q(x,y) > 0\text{ and }p(y)q(y,x)>0},$$  and define
  \[
    \alpha(x, y) = \begin{cases}
      \min\set{1,r(y,x)}, \quad \text{if } (x, y)\in R,\\
      0, \quad \text{otherwise}.
    \end{cases}
  \]
  Then the Metropolis--Hastings algorithm satisfies \cref{eq:detailedbalance}.
\end{theorem}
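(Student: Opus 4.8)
The plan is to reduce the detailed-balance identity \cref{eq:detailedbalance} to a single pointwise identity between densities and then to verify that identity by a one-line symmetry argument. First I would invoke the hypothesis that $\nu$ dominates $\pi$ and every $Q(x,\cdot)$, together with the Radon--Nikodym theorem for kernels (\cref{rnthm}): this yields a density $p$ with $\pi(A)=\int_A p\,d\nu$ and, crucially, a \emph{jointly} measurable density $q\colon X\times X\to[0,\infty)$ with $Q(x,A)=\int_A q(x,y)\,\nu(dy)$. Joint measurability is exactly what lets me regard $p(x)q(x,y)$ and $p(y)q(y,x)$ as measurable functions on the product space $(X\times X,\Sigma_X\otimes\Sigma_X)$ and rewrite each side of \cref{eq:detailedbalance}, for bounded measurable $f$, as an integral against the common reference measure $\nu\otimes\nu$:
\begin{align*}
\iint f(x,y)\,\alpha(x,y)\,\pi(dx)\,Q(x,dy) &= \iint f(x,y)\,\alpha(x,y)\,p(x)q(x,y)\,\nu(dx)\,\nu(dy),\\
\iint f(x,y)\,\pi(dy)\,Q(y,dx)\,\alpha(y,x) &= \iint f(x,y)\,\alpha(y,x)\,p(y)q(y,x)\,\nu(dx)\,\nu(dy).
\end{align*}
Since $f$ is arbitrary, the whole theorem then reduces to the pointwise claim that $p(x)q(x,y)\,\alpha(x,y)=p(y)q(y,x)\,\alpha(y,x)$ on $X\times X$.

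The heart of the argument is the verification of this pointwise identity, and here the role of the set $R$ becomes clear. On $R$ both $p(x)q(x,y)>0$ and $p(y)q(y,x)>0$, so the ratio defining $r$ is legitimate and, substituting $\alpha(x,y)=\min\set{1,r(y,x)}$ with $r(y,x)=\tfrac{p(y)q(y,x)}{p(x)q(x,y)}$,
\[
p(x)q(x,y)\,\alpha(x,y)=p(x)q(x,y)\,\min\set{1,\tfrac{p(y)q(y,x)}{p(x)q(x,y)}}=\min\set{p(x)q(x,y),\,p(y)q(y,x)}.
\]
The right-hand side is symmetric under exchanging $x$ and $y$, so repeating the computation with $x$ and $y$ swapped shows it also equals $p(y)q(y,x)\,\alpha(y,x)$; thus the identity holds on $R$. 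The definition of $R$ as the set where \emph{both} products are strictly positive is precisely the device that keeps this division away from the $0/0$ ambiguity. Off $R$ I would note that the condition defining $R$ is symmetric, so $(x,y)\notin R$ iff $(y,x)\notin R$, whence $\alpha(x,y)=\alpha(y,x)=0$ and both sides vanish; even in the mixed case where one product is positive and the other is zero, the positive side is annihilated by its acceptance factor and the zero side vanishes outright. Hence the pointwise identity holds everywhere on $X\times X$.

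Finally, I would integrate the established pointwise identity against $f(x,y)\,\nu(dx)\,\nu(dy)$ and use Tonelli's theorem (the integrands being nonnegative, or $f$ bounded) to restore the iterated kernel form, obtaining \cref{eq:detailedbalance}. I expect the main obstacle to be measure-theoretic bookkeeping rather than analytic difficulty: extracting the \emph{jointly} measurable density $q$ from \cref{rnthm} (so that the product densities are well defined on $\Sigma_X\otimes\Sigma_X$), correctly disposing of the exceptional set $R^\complement$ on both sides of the identity, and justifying the Tonelli interchange. Once the densities are in place, the substantive content collapses to the single observation that $p(x)q(x,y)\,\alpha(x,y)=\min\set{p(x)q(x,y),p(y)q(y,x)}$ is symmetric in its arguments.
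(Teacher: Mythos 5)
Your proof is correct. Be aware, though, that the paper itself contains no proof of this theorem: it is imported by citation from \citet{Tie:98}, so there is nothing in-paper to compare against. What you have written is the standard argument for the setting where a common dominating measure $\nu$ exists: kernel Radon--Nikodym to obtain densities, rewriting both sides of the balance condition against $\nu\otimes\nu$, the pointwise identity $p(x)q(x,y)\,\alpha(x,y)=\min\set{p(x)q(x,y),\,p(y)q(y,x)}$ on $R$, symmetry of that expression and of the definition of $R$ (so that off $R$, including the mixed case where exactly one product is positive, both sides vanish), and finally Fubini/Tonelli to reassemble the iterated integrals. Each of these steps is sound.

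Two caveats, neither fatal. First, you have silently repaired a typo in the paper: as printed, \cref{eq:detailedbalance} has no factor $\alpha(x,y)$ under the left-hand integral, and in that form the statement is false --- take $f\equiv 1$, so the left side equals $1$ while the right side is $\iint \alpha(y,x)\,\pi(dy)Q(y,dx)\le 1$, with equality only when $\alpha=1$ almost everywhere. The identity you actually prove, with $\alpha(x,y)$ on the left, is the correct detailed-balance condition and the one the paper relies on later (e.g.\ in \cref{prop:MHmeas-pres}). Second, your appeal to \cref{rnthm} for a \emph{jointly} measurable density $q$ uses that theorem's hypothesis that $\Sigma_X$ is Borel; \cref{thm:tie98} as stated does not impose this, so strictly speaking you should either add that hypothesis or note that it is inherited from the ambient setting --- some such regularity is genuinely needed, since joint measurability of $(x,y)\mapsto q(x,y)$ is what makes $R$ and $\alpha$ measurable and your $\nu\otimes\nu$ integrals well defined.
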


\section{Language Syntax and Semantics}
\label{sec:semantics}
In this section, we present an idealized first-order probabilistic language with a construct $\Stationary$ that takes as input a transition kernel for a Markov chain on some state space and returns the stationary distribution associated with the Markov chain. 
The language we present is based on the first-order probabilistic language introduced and studied by \citet{SWY+:16} and \citet{Sta:17}, 
which has constructs for sampling, soft constraints, and normalization. 
The key differences, which we highlight again below, are (i) a syntactic distinction between probabilistic terms with and without soft constraints, which affects also typing, (ii) error cases in the denotation of $\Normalize$, and (iii) the introduction of the new construct, $\Stationary$.

\subsection{Probabilistic language with construct for stationary}
We begin with types and syntax of the language, presented in \cref{fig:lang1}. For the remainder of this paper, we call this language $\langN$.

\begin{figure}[]
  \begin{align*}
    &\defn{Types: } \\
    &\quad\TypeA_0, \TypeA_1 ::= \Reals \mid \Unit  \mid  \mathrm{P}(\TypeA) \mid  \TypeA_0 \times \TypeA_1 \mid \sum_{i\in \Nats}\TypeA_i\\
    &\defn{Terms: } \\
    &\quad\defn{deterministic:} \\
    &\quad a_0,a_1  ::= x \mid * \mid (a_0,a_1) \mid(i,a) \mid \pi_j(a) \mid f(a)\\
    &\quad \qquad \quad \mid \Case a \Of \set{(i,x)\Rightarrow a_i}_{i\in I} \\
    &\quad\textit{\uprobabilistic{}:}\\
    &\quad t_0,t_1  ::= \Sample (a) \mid \Return (a) \mid \Let x=t_0 \In t_1 \\ 
    &\quad\qquad \quad         \mid \Case a \Of \set{(i,x)\Rightarrow t_i}_{i\in I} \\
    &\quad\qquad \quad         \mid \Stationary (t_0, \lambda x.t_1) \mid \Normalize (v)\\
    &\quad\textit{probabilistic:}\\
    &\quad v_0,v_1  ::= t \mid \Let x=v_0 \In v_1 \\ 
    &\quad\qquad \quad         \mid \Case a \Of \set{(i,x)\Rightarrow v_i}_{i\in I} \\
    &\quad\qquad \quad         \mid \Score(a) \\
    &\defn{Program: }\\
    &\quad \text{$t$ is a program if $t$ is \uprobabilistic{}}\\
    & \qquad\qquad \text{ with no free variables}
    \end{align*}
  \caption[]{Syntax for the probabilistic language: $\langN$}
  \label{fig:lang1}
\end{figure}

\subsection*{Types}
We study a typed probabilistic programming language where every term in the language has a type generated by the following grammar:
\begin{align*}
\TypeA_0, \TypeA_1 ::= \Reals \mid \Unit  \mid  \mathrm{P}(\TypeA) \mid  \TypeA_0 \times \TypeA_1 \mid \sum_{i\in I}\TypeA_i,
\end{align*}
where $I$ is some countable, non-empty set. 
The language has a standard unit type, product types, and sum types. 
Denotationally, types are interpreted as measurable spaces, i.e., each type $\TypeA$ represents some set $\denote{\TypeA}$ coupled with a $\sigma$-algebra  $\Sigma_{\denote{\TypeA}}$ associated with the set $\denote{\TypeA}$. 
We'll use $\denote{\TypeA}$ as shorthand for the measurable space $(\denote{\TypeA}, \Sigma_{\denote{\TypeA}})$.
The language has a special type $\Reals$ for real numbers and, for each type $\TypeA$, a type $P(\TypeA)$ for probability distributions on (the measurable space denoted by) $\TypeA$.

Now we give the space and the $\sigma$-algebra associated to each type and type constructor:
\begin{itemize}
  \item For the type $\Reals$, the underlying space $\denote{\Reals}$ is the space of real numbers; $\Sigma_{\denote{\Reals}}$ is the (standard Borel) $\sigma$-algebra generated by the set of open subsets on the real line.
  \item For the unit type $\Unit$, the underlying space is $\set{()}$; $\Sigma_{\denote{\Unit}} = \set{\set{()}, \varnothing}$ is the corresponding $\sigma$-algebra.
  \item For the type $\mathrm{P}(\TypeA)$, the underlying space is the set of probability measure on $\denote{A}$, denoted $\calM(\denote{\TypeA})$; 
  $\Sigma_{\calM(\denote{\TypeA})}$ is the the $\sigma$-algebra generated by sets $\set{\mu~|~\mu(A)\leq r}$ for all $A \in \Sigma_{\denote{\TypeA}}$ and $r\in [0,1]$.
  \item For the product type $\TypeA \times \TypeB$, the underlying space is  the product space $\denote{\TypeA\times \TypeB} = \denote{\TypeA}\times \denote{\TypeB}$; $\Sigma_{\denote{\TypeA\times \TypeB}}$ is the (product) $\sigma$-algebra generated by the rectangles $U\times V$, for $U \in \Sigma_{\denote{\TypeA}}$ and $V\in \Sigma_{\denote{\TypeB}}$. 
  \item For the sum type $\sum_{i\in I}\TypeA_i$, the underlying space is the co-product space 
  $\denote{\sum_{i\in I} \TypeA_i} = \biguplus_{i\in I} \denote{\TypeA_i}$; 
  informally, $\Sigma_{\denote{\sum_{i\in I} \TypeA_i}}$ is the $\sigma$-algebra generated by the sets of the form $\set{(i,a) ~|~ a\in U}$ for $U\in \Sigma_{\denote{\TypeA_i}}$.
\end{itemize}

\subsection*{Terms}
As in \citet{SWY+:16},
each term in the language is either \defn{deterministic} or \defn{\probabilistic{}}, 
satisfying typing judgments of the form 
$\Gamma |-d t\colon\TypeA$ and 
$\Gamma |-p t\colon\TypeA$, respectively, given some environment/context $\Gamma = (x_1\colon\TypeA_1, ..., x_n\colon\TypeA_n)$.
Letting $\denote{\Gamma} = \prod_{i=1}^n \denote{\TypeA_i}$, 
a deterministic term denotes a measurable function from the environment $\denote{\Gamma}$ to $\denote{\TypeA}$.
As in \citet{Sta:17}, a probabilistic term denotes an $S$-finite kernel from $\denote{\Gamma}$ to $\denote{\TypeA}$. 
Different from \citet{SWY+:16,Sta:17}, we distinguish a subset of probabilistic terms we call \defn{\uprobabilistic{}},
which satisfy an additional typing judgment $\Gamma |-1 t\colon\TypeA$. 
A \uprobabilistic{} term denotes a probability kernel from $\denote{\Gamma}$ to $\denote{\TypeA}$.
Departing again from \citet{SWY+:16, Sta:17},
a \defn{program} in our language is a \uprobabilistic{} term with no free variables.

\subsubsection{Variables, measurable functions, constructors, and destructors}

As in \citet{SWY+:16}, 
the language contains standard variables, constructors, and destructors, and constant terms $f$ for all the measurable functions $f\colon\denote{\TypeA_0}\to \denote{\TypeA_1}$. 
The typing rules and semantics are unchanged and reproduced here for completeness:
\begin{align*}
  &\quad\inference*{}{\Gamma, x\colon\TypeA, \Gamma' |-d x\colon \TypeA} \quad\quad   
  \inference*{}{\Gamma |-d ()\colon \Unit} 
  \\
  &\quad \quad\;\;  \inference*{\Gamma |-d t\colon \TypeA_i}{\Gamma |-d (i, t)\colon \sum_{j\in I}\TypeA_j}\ (i \in I)\\
 &\inference*{\Gamma |-d t\colon \sum_{i\in I}\TypeA_i \quad (\Gamma, x\colon \TypeA_i |-z u_i\colon \TypeB)_{i\in I}}{\Gamma|-z \Case t \Of \set{(i,x) \implies u_i}_{i\in I}\colon \TypeB}(z\in \set{d,p, p1})\\
  & \inference*{\Gamma |-d t_0\colon \TypeA_0 \quad \Gamma |-d t_1\colon \TypeA_1}{\Gamma |-d (t_0,t_1)\colon \TypeA_0 \times \TypeA_1} \quad \inference*{\Gamma |-d t\colon \TypeA_0}{\Gamma |-d f(t)\colon \TypeA_1} \\
  & \inference*{\Gamma |-d t \colon \TypeA_0\times \TypeA_1}{\Gamma |-d \pi_j(t)\colon \TypeA_j} (j\in \set{0,1}).
\end{align*}
\begin{align*}
&\denote{x}_{\gamma, d, \gamma'} \defeq d,\quad \denote{()}_{\gamma} \defeq (), \quad \denote{(i,t)}_\gamma \defeq (i, \denote{t}_\gamma), \\ 
&\denote{(t_0,t_1)}_\gamma \defeq (\denote{t_0}_\gamma, \denote{t_1}_\gamma), \quad \denote{f(t)}_\gamma \defeq f(\denote{t}_\gamma), \\
&\denote{\pi_j(t)}_\gamma \defeq d_j \;\; \text{if } \denote{t}_\gamma \defeq(d_0,d_1).
\end{align*}
For case statements, if the kind judgement gives a deterministic term, the semantics is
\[\denote{\Case t \Of \set{(i,x) \implies u_i}_{i\in I}}_\gamma \defeq \denote{u_i}_{\gamma, d} \;\; \text{if } \denote{t}_\gamma = (i,d).
\]
If the kind judgement gives a probabilistic term, the term denotes a $S$-finite kernel and the semantics is
\begin{align*}
  &\denote{\Case t \Of \set{(i,x) \implies u_i}_{i\in I}}_{\gamma, A}  \\
  &\qquad\qquad \defeq \denote{u_i}_{\gamma, d, A} \;\; \text{if } \denote{t}_\gamma = (i,d),    
\end{align*}
where $A \in \Sigma_{\denote{\TypeB}}$.

\subsubsection{Sequencing and sampling terms} 
In addition to standard $\Let$statements and $\Return$ statements for sequencing, 
the language has a construct for producing a random sample from a probability distribution. 
The typing rules are as follows:
\begin{align*}
  \inference*{\Gamma |-z1 t_1\colon\TypeA \quad \Gamma,x:\TypeA |-z2 t_2\colon\TypeB}{\Gamma |-z3 \Let x=t_1 \In t_2 \colon\TypeB},
\end{align*}
where $z_3 = \begin{cases}
    p1 \quad \text{if }z_1 = p1\wedge z_2 =p1\\
    p \quad\; \text{if }z_1 = p\vee z_2 =p
  \end{cases}.$
 \begin{align*}
  \inference*{\Gamma |-d t\colon\TypeA }{\Gamma |-1 \Return(t)\colon\TypeA} \qquad \inference*{\Gamma |-d t\colon\mathrm{P}(\TypeA) }{\Gamma |-1 \Sample(t)\colon\TypeA}
\end{align*}

Now all the terms for sequencing are judged as probabilistic terms, where $\Sample$ and $\Return$ are in fact probabilistic  measure one terms. Also, if both $t_1$ and $t_2$ of the $\Let$statements are judged as \uprobabilistic{} terms then so is the $\Let$statement. 

As in \citet{Sta:17}, the semantics of the $\Let$construct is defined in terms of integration as follows:
\[
  \denote{\Let x=t_1\In t_2}_{\gamma, A} \defeq \int_{\denote{\TypeA}} \denote{t_2}_{\gamma,x, A} \denote{t_1}_{\gamma, dx}
\]
Since both $t_1$ and $t_2$ are probabilistic terms, both are interpreted as S-finite kernels; The category of S-finite kernels is closed under composition, thus the term $\Let x=t_1\In t_2$ is also interpreted as an S-finite kernel.

The semantics of the $\Return$ statement is given by the kernel $\denote{\Return(t)}\colon \denote{\Gamma} \times \Sigma_{\denote{\TypeA}} \to [0,1]$
\[
\denote{\Return(t)}_{\gamma, A}  \defeq \begin{cases}
  1 \quad \text{if }\denote{t}_\gamma \in A \\
  0 \quad \text{otherwise}
\end{cases}
\]
Finally the $\Sample$ statement takes in as argument a deterministic term of type $\mathrm{P}(\TypeA)$ that is it takes in as argument a probability measure on the space $\denote{\TypeA}$. Thus the semantics is given as:
\[
  \denote{\Sample (t)}_{\gamma, A} =  \denote{t}_{\gamma, A}, 
\]
where $\denote{t}_\gamma\in \calM(\denote{\TypeA})$. 

\subsubsection{Soft constraints and normalization terms}
\label{sec:score;norm}
We are studying a probabilistic language for Bayesian inference; We have terms in the language that is used to scale the prior by the likelihood of some observed data and a term that re-normalizes the scaled measure to return the posterior distribution over the return type. The constructs $\Score$ and $\Normalize$ are the constructs that, respectively, scale the prior program, and normalize the program to return the posterior probability distribution on the output type, if there exists one. 
\begin{align*}
  \inference*{\Gamma |-d t\colon \Reals}{\Gamma |-p \Score(t)\colon\Unit} \qquad \inference*{\Gamma |-p t\colon \TypeA}{\Gamma |-1 \Normalize(t)\colon(\TypeA + \Unit)}.
\end{align*}
The semantics for the $\Score$ construct are given by a $S$-finite kernel on the $\Unit$ type as follows:
\begin{align*}
  \denote{\Score(a)}_{\gamma,A}= \begin{cases}
    \abs{\denote{t}_\gamma} \quad \text{if } A=\set{()}\\
    0 \qquad\quad \text{otherwise}
  \end{cases}
\end{align*}

The main difference between this semantics and the denotational semantics of the language proposed in \citet{SWY+:16, Sta:17} is in the semantics of $\Normalize$. We interpret the semantics of $\Normalize$ terms as a probability kernel on the sum space given as $\denote{\Normalize(t)}\colon \denote{\Gamma}\times\Sigma_{\TypeA+1}\to [0,1]$ defined as
\[
\denote{\Normalize(t)}_{\gamma, A} = 
\begin{cases}
  \frac{\denote{t}_{\gamma,\set{u|(0,u)\in A}}}{\denote{t}_{\gamma,\denote{\TypeA}}}\quad\text{if } \denote{t}_{\gamma,\denote{\TypeA}}\in (0,\infty)\\
0 \qquad\quad\qquad\;\; \text{else if } (1,()) \not\in A\\
1 \qquad\quad\qquad\;\; \text{else if } (1,()) (1,()) \in A
\end{cases},
\]
where $A \in\denote{\TypeA + 1}.$ The key distinction is that we are not able to determine if the term $\denote{t}_\gamma$ is an infinite measure or a null measure.

\subsubsection{Stationary terms}
One of the main contributions of this paper is that we propose a new feature in the probabilistic language that takes as argument an initial distribution and a Markov chain transition kernel on some state space, and returns the corresponding stationary distribution that leaves the kernel invariant. We allow the users to define a transition kernel on some measurable space using a standard lambda expression. The following is the syntax and typing rules for the stationary term:
\begin{align*}
  \inference*{\Gamma |-1 t_0\colon \TypeA\quad \Gamma, x:\TypeA |-1 t_1\colon \TypeA}{\Gamma |-1 \Stationary(t_0, \lambda x. t_1):\TypeA + 1}.
\end{align*}

First note that $\denote{t_0}_\gamma$ represents the initial distribution and $\denote{t_1}_\gamma$ represents the transition kernel for the Markov chain. 
The denotational semantics of $\denote{\Stationary(t_0,\lambda x.t_1)}: \denote{\Gamma}\times \Sigma_{\denote{\TypeA+1}} \to [0,1]$ is given as:
\begin{align*}
  &\denote{\Stationary(t_0,\lambda x.t_1)}_{\gamma, A} =\\
  &=\begin{cases}
    \mu(\set{u:(0,u)\in A}) &\begin{aligned}
    &\text{if } \exists! \mu\in \calM(\denote{\TypeA}): \\
    &\int_{\denote{\TypeA}} \denote{t_1}^{(n)}(x,\cdot)\to_n \mu \\
    & \text{for }x \text{ a.e. }\denote{t_0}_\gamma.
    \end{aligned}\\
    \delta_{(1,())} (A) & \text{otherwise}
  \end{cases}
\end{align*}
The function $\denote{\Stationary(t_0,\lambda x.t_1)}$ is jointly measurable. See \cref{appendix:statmeasurability} for more details.

\section{Eliminability of soft constraints and normalization terms}
\label{sec:equivalence}
We introduced a new language construct $\Stationary$, 
which takes a Markov chain transition kernel and returns the corresponding stationary distribution, and added this
construct to the language proposed in \citet{SWY+:16}. 
In this section, we compare the original language and our extension, using the framework of \defn{expressibility} due
to \citet{Fel:91}. 
In particular, we show that the soft constraint and normalization terms, presented in \cref{sec:score;norm}, are
\defn{eliminable} from the language $\langN$.

\subsection{Expressibility of programming languages}

In this section, we summarize the framework of relative expressibility of programming languages. For a detailed account, see \citet{Fel:91}. 
We begin by giving a formal definition of a programming language and conservative restrictions:
\begin{definition}[Programming Language]
  A programming language $\calL$ consists of 
  \begin{itemize}
    \item a set of $\calL$-phrases generated from a pre-specified syntax grammar, which consists of possibly infinite number of function symbols $\bbF_1, \bbF_2, \dots$ with arities $a_1, a2, \dots $ respectively; 
    \item a set of $\calL$-programs is a non-empty subset of $\calL$-phrases;
    \item a semantics to the terms of the language. 
  \end{itemize}
  A programming language $\calL'$ is a conservative restriction of a language $\calL$ if 
  \begin{itemize}
    \item the set of constructors of the language $\calL'$ is a subset of constructors of $\calL$, i.e., there exists a set $\set{\bbF_1, \bbF_2,\dots}$ that is subset of the constructors of the language $\calL$ but not in the language $\calL'.$
    \item the set of $\calL'$-phrases is the full subset of $\calL$-phrases that do not contain any constructs in $\set{\bbF_1, \dots , \bbF_n,\dots}$;
    \item the set of $\calL'$-programs is the full subset of $\calL$-programs that do not contain any constructs in ${\bbF_1,\dots, \bbF_n,\dots}$; 
    \item the semantics of $\calL'$ is a restriction of $\calL$'s semantics. 
  \end{itemize}
\end{definition}

Informally speaking, a language construct from the language is \defn{eliminable} if we can find a \defn{computable} function that: 1) maps every program in the original language to a program in its conservative restriction; 2) is ``local'' for all the language constructs that exists in the conservative restriction. We make this notion precise in the following definition: 

\begin{definition}[Eliminability of programming constructs]  \label{eliminability}
  Let $\calL$ be a programming language and let $A = \set{\bbF_1, ... , \bbF_n, ...}$  be a subset of its constructors such that $\calL'$ is a conservative restriction of $\calL$ without the constructors in $A$. The programming constructs $\bbF_i \in A$ are \defn{eliminable} if there is a computable mapping $\compiler$ from $\calL$-phrases to $\calL'$-phrases such that:
  \begin{enumerate}
    \item $\compiler(t)$ is an $\calL'$-program for all $\calL$-programs $t$;
    \item $\compiler(\bbF(t_1,\dots, t_d)) = \bbF(\compiler(t_1),\dots, \compiler(t_d))$ for any construct $\bbF$ of $\calL'$, i.e., $\compiler$ is homomorphic in all constructs of $\calL'.$
    \item For all programs $t$, 
    $\denote{t} = \denote{\compiler(t)}$.
  \end{enumerate}
\end{definition}

Note that the first two properties of eliminability (\cref{eliminability}) are syntactic, 
while the third property is semantic.

\subsection{Eliminability of the constructs ``norm'' and ``score''}

Consider the conservative restriction of $\langN$ obtained by removing the $\Normalize$ and $\Score$ constructs. 
(We denote this restriction by $\langS$, and provide its syntax in \cref{fig:lang2}.)
The main result of this section shows that $\langN$ and $\langS$ are equally expressive:

\begin{figure}[]
  \begin{align*}  
    &\defn{Types: } \\
    &\quad\TypeA_0, \TypeA_1 ::= \Reals \mid \Unit  \mid  \mathrm{P}(\TypeA) \mid  \TypeA_0 \times \TypeA_1 \mid \sum_{i\in \Nats}\TypeA_i\\
    &\defn{Terms: } \\
    &\quad\textit{deterministic:} \\
    &\quad a_0,a_1  ::= x \mid * \mid (a_0,a_1) \mid(i,a) \mid \pi_j(a) \mid f(a)\\
    &\quad \qquad \quad \mid \Case a \Of \set{(i,x)\Rightarrow a_i}_{i\in I} \\
    &\quad\textit{\uprobabilistic{}: }\\
    &\quad t_0,t_1  ::= \Sample (a) \mid \Return (a) \mid \Let x=t_0 \In t_1 \\ 
    &\quad\qquad \quad         \mid \Case a \Of \set{(i,x)\Rightarrow t_i}_{i\in I} \\
    &\quad\qquad \quad         \mid \Stationary (t_0, \lambda x.t_1) \\
    &\defn{Program: }\\
    &\quad \text{$t$ is a program if $t$ is \uprobabilistic{}}\\
    & \qquad\qquad \text{ with no free variables}
    \end{align*}
  \caption[]{Syntax for the probabilistic language: {$\langS$} }
  \label{fig:lang2}
\end{figure}

\begin{theorem}[Eliminability of $\Normalize$ and $\Score$]
  \label{thm:equivalence}
  The constructs $\Normalize$ and $\Score$ are eliminable from $\langN$. 
\end{theorem}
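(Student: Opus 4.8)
The plan is to exhibit an explicit computable transformation $\compiler$ witnessing \cref{eliminability}, designed so that properties (1) and (2) hold by construction and property (3)---semantic preservation on programs---carries the analytic content. I define $\compiler$ by structural recursion that is homomorphic on every construct of $\langS$ (namely $\Sample$, $\Return$, $\Let$, $\Case$, and $\Stationary$), which makes property (2) immediate; the only substantive clauses are those for the eliminated constructs $\Score$ and $\Normalize$. In any $|-1$ term---and in particular in any program---every $\Score$ is syntactically enclosed by a $\Normalize$ (the sole rule returning to the $|-1$ judgement from a scored $|-p$ term), so the image of a bare $\Score$ is irrelevant at the program level; I set $\compiler(\Score(a))=\Return(())$ and concentrate all real work in the $\Normalize$ clause.

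The guiding idea is that $\compiler(\Normalize(v))$ will implement a Metropolis--Hastings independence sampler whose stationary distribution is the posterior denoted by $\Normalize(v)$, packaged with $\Stationary$. The first ingredient is an auxiliary weight-tracking transformation $\compiler_{\mathrm w}$ sending each scored term $v\colon\TypeA$ to a purely probabilistic $\langS$ term $\compiler_{\mathrm w}(v)\colon\TypeA\times\Reals$ that samples a pair $(x,w)$: one sets $\compiler_{\mathrm w}(\Score(a))=\Return((),\abs{a})$ and $\compiler_{\mathrm w}(\Sample(p))=\Let x=\Sample(p)\In\Return(x,1)$, multiplies weights through $\Let$ and $\Case$, and at any enclosed $|-1$ subterm $s$ (including nested $\Normalize$ or $\Stationary$) recurses through the main compiler via $\compiler_{\mathrm w}(s)=\Let y=\compiler(s)\In\Return(y,1)$. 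I will prove by induction on $v$ the reweighting identity $\int w\,h(x)\,\denote{\compiler_{\mathrm w}(v)}_\gamma(d(x,w))=\int h\,d\denote{v}_\gamma$ for all bounded measurable $h$; in particular $\lambda_\gamma\defeq\denote{\compiler_{\mathrm w}(v)}_\gamma$ is a probability measure on $\TypeA\times\Reals$ whose $x$-marginal is the ``prior'', and $Z_\gamma\defeq\av_{\lambda_\gamma}[w]=\denote{v}_{\gamma,\denote{\TypeA}}$ is the normalizing constant.

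I then set $\compiler(\Normalize(v))$ to be $\Stationary(\compiler_{\mathrm w}(v),\,\lambda s.\,t_{\mathrm{step}})$ post-composed with a deterministic $\Case$ that projects an accepted state $(x,w)$ to $x$ and relays the error token, so the result has type $\TypeA+\Unit$. The transition $t_{\mathrm{step}}$ draws a fresh proposal $(x',w')$ from $\compiler_{\mathrm w}(v)$ and a uniform $u$, and returns (through one measurable accept/reject function) the proposal when $u\,w<w'$ and the current state otherwise. Taking as target the measure $\Pi_\gamma$ with density $w/Z_\gamma$ with respect to $\lambda_\gamma$ (whose $x$-marginal is exactly the posterior) and as proposal the independence kernel $\lambda_\gamma$, I instantiate \cref{thm:tie98}: dominating both by $\lambda_\gamma$ gives densities $p=w/Z_\gamma$ and $q\equiv 1$, so the ratio $r$ collapses to $w/w'$ and, on the region $R$ where $w,w'>0$, Tierney's acceptance probability equals $\min\set{1,w'/w}$, matching the program's rule; \cref{rnthm} supplies measurability of these derivatives. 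The two rules differ only on the $\Pi_\gamma$-null set $\set{w=0}$, so by the detailed-balance guarantee of \cref{thm:tie98} on $R$ the constructed kernel leaves $\Pi_\gamma$ invariant.

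Finally I must match the $\Stationary$ semantics against that of $\Normalize$ regime by regime, which is where the main difficulty lies. When $Z_\gamma\in(0,\infty)$ the independence sampler is $\Pi_\gamma$-irreducible (as $\lambda_\gamma\gg\Pi_\gamma$) and aperiodic with invariant probability $\Pi_\gamma$; by positive Harris recurrence its $n$-step laws converge in total variation to $\Pi_\gamma$ from every start with $w<\infty$---hence from $\denote{\compiler_{\mathrm w}(v)}_\gamma$-almost every state, with states of weight $0$ escaping into the support in one accepted step---so $\Stationary$ returns $\Pi_\gamma$ and, after projection, exactly $\denote{\Normalize(v)}_\gamma$. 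When $Z_\gamma=0$ (so $w=0$ $\lambda_\gamma$-a.e.) or $Z_\gamma=\infty$ (so the invariant measure $\propto w$ is infinite), no unique invariant probability exists, the convergence hypothesis of $\Stationary$ fails, and it returns $\delta_{(1,())}$, coinciding with the error branch of the purpose-built $\Normalize$ semantics. The principal obstacle is exactly this step: upgrading the Markov-chain convergence statement to an \emph{exact} equality of probability kernels---including the measure-zero edge behavior (weight-zero states, the domain $R$ of \cref{thm:tie98}, and division in the accept rule) and the joint measurability demanded by $\Stationary$---and checking that the three regimes line up precisely with the three branches of $\Normalize$. Granting this, properties (1) and (3) follow by structural induction over $|-1$ terms, the $\Normalize$ clause being the only nontrivial case.
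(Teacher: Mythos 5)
Your proposal is correct in substance and follows the same skeleton as the paper's proof: an explicit compiler, homomorphic on every $\langS$ construct, whose only substantive clause sends $\Normalize$ to $\Stationary$ applied to a Metropolis--Hastings independence sampler with the score-stripped program as proposal and a likelihood ratio as acceptance probability, justified by \cref{thm:tie98}; semantic preservation is then a structural induction whose crux is the $\Normalize$ case (the paper's \cref{lem:preservessemantics} and \cref{prop:MHmeas-pres}). The genuine difference is the state space of the chain. The paper runs MH on \emph{full traces}: $\Tracer$ (\cref{fig:tracertrans}) returns the tuple of all random choices, $\Prior\circ\Tracer$ strips the scores, and $\Likelihood\circ\Tracer$ is shown (via \cref{rnthm} and \cref{lem:priorabc}) to be the Radon--Nikodym derivative of the traced measure with respect to the prior-traced one. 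You instead augment the return value with the accumulated weight and run MH on $\denote{\TypeA}\times\Reals$ with target density $w/Z_\gamma$ relative to the weighted prior $\lambda_\gamma$; your reweighting identity plays exactly the role of the paper's Radon--Nikodym lemma. Your variant buys a smaller state space, a cleaner treatment of nesting (your $\compiler_{\mathrm w}$ explicitly recurses through $\compiler$ at enclosed $|-1$ subterms, whereas the paper's $\Tracer$ and $\Prior$ map $\Normalize(t)\mapsto\Normalize(t)$ untouched, so its output lands in $\langS$ only once the compiler is reapplied to nested occurrences), and---not cosmetically---an acceptance rule $u\,w<w'$ that permits escape from weight-zero states. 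The paper's $\alpha$, following \cref{thm:tie98} literally, is $0$ whenever the current likelihood vanishes, so its chain is absorbed at zero-likelihood states, which can carry positive prior mass; the a.e.-convergence clause in the semantics of $\Stationary$ can then fail even though reversibility holds, a gap in \cref{prop:MHmeas-pres} that your rule sidesteps, and your explicit appeal to Harris recurrence addresses a convergence step the paper leaves implicit. Two caveats: your reweighting induction should be stated for nonnegative rather than bounded measurable $h$ (in the $\Let$ case the inner integral can be unbounded, since subterms denote S-finite kernels); and in the $Z_\gamma=0$ regime your argument shares the paper's degenerate edge case (a point-mass prior with identically zero score yields a stuck chain with a \emph{unique} limit, so $\Stationary$ returns that point rather than $\delta_{(1,())}$), so on that point you are exactly as incomplete as the paper, no more.
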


To prove \cref{thm:equivalence}, we give an explicit compiler $\compiler : \langN \to \langS$ that meets the requirements
of eliminability.
This compiler is given in its entirety in \cref{fig:compilerdiscr}. 
The compiler can be viewed as a simplified version of the Trace-MH algorithm first described in \citet{WSG:11}. 
The primary difference is the proposal kernel.

\begin{figure*}[tp]
    \centering
       \begin{subfigure}[t]{\textwidth}
        \centering
        \begin{align*}
        \Tracer(t) :=& \Match t \With \\
            &\quad\left\{\Sample(t)\mapsto \Sample(t), \; \Return(t)\mapsto \Return(t),\; \Stationary(t_0, \lambda x. t_1) \mapsto \Stationary(t_0, \lambda x. t_1), \right.\\
            &\quad\;\;\Score(t)\mapsto \Score(t),\; \Normalize(t) \mapsto \Normalize(t),\\
            &\;\;\quad \Let x = t_0 \In t_1 \mapsto \begin{aligned} &\Let trace_x = \Tracer(t_0) \In \\
  &\Let trace_y =\Tracer(t_1[x \backslash \pi_{-1}(trace_x)]) \In\\
  &\Return(trace_x, trace_y)\end{aligned},\\
          &\;\left.\quad\Case a \Of \set{(i,x) \implies u_i} \mapsto \Case a \Of \set{(i,x) \implies \Tracer (u_i)}\right\}
        \end{align*}
        \caption{Tracer Transformation}
        \label{fig:tracertrans}
        \end{subfigure}
    \begin{subfigure}[t]{\textwidth}
        \centering
        \begin{align*}
        \Prior\circ\Tracer(t) :=& \Match t \With\\
            &\quad\left\{\Sample(t)\mapsto \Sample(t),\;\Return(t)\mapsto \Return(t),\; \Stationary(t_0, \lambda x. t_1) \mapsto \Stationary(t_0, \lambda x. t_1),\right.\\
            &\;\;\quad\Score(t)\mapsto \Return(()),\; \Normalize(t) \mapsto \Normalize(t),\\
            &\;\;\quad \Let x = t_0 \In t_1 \mapsto\begin{aligned} &\Let trace_x = \Prior\circ\Tracer(t_0) \In \\
            &\Let trace_y =\Prior\circ\Tracer(t_1[x \backslash \pi_{-1}(trace_x)]) \In\\
            &\Return(trace_x, trace_y)
            \end{aligned},\\
          &\;\quad\Case a \Of \set{(i,x) \implies u_i} \mapsto \left.\Case a \Of \set{(i,x) \implies \Prior\circ\Tracer (u_i)}\right\}
        \end{align*}
        \caption{Prior Transformation}
        \label{fig:priortrans}
        \end{subfigure}
       \begin{subfigure}[t]{\textwidth}
        \centering
        \begin{align*}
        \Likelihood\circ\Tracer(t) := & \Match t \With \\
        &\quad\left\{\Sample(t)\mapsto 1,  \Return(t)\mapsto 1,\; \Stationary(t_0, \lambda x. t_1) \mapsto 1, \right.\\
        &\;\;\quad \Score(t)\mapsto |t|,\Normalize(t)\mapsto 1\\
        &\;\;\quad \Let x = t_0 \In t_1 \mapsto \Likelihood\circ\Tracer(t_0) * \Likelihood\circ\Tracer(t_1),\\
        &\;\left.\quad \Case a \Of \set{(i,x) \implies t_i}_{i\in I} \mapsto \Case a \Of \set{(i,x) \implies \Likelihood(t_i)}_{i\in I}\right\}
        \end{align*}
        \caption{Likelihood Transformation}
        \label{fig:lhdtrans}
        \end{subfigure}
        \begin{subfigure}[t]{\textwidth}
        \centering
        \begin{align*}
        &\MH(\Normalize(t)) := \Stationary\left(\Prior\circ\Tracer(t),
        \begin{aligned}
            &\lambda x. \Let x' = \Prior\circ\Tracer(t) \In  \\
            &\quad\;\Case \Sample(\BernoulliDist(\alpha(x, x'))) \Of \set{(0,T) \Rightarrow \Return(x') , (1,F) \Rightarrow \Return(x))}\\
        \end{aligned} \right),\\ 
        &\qquad \qquad\qquad\quad\text{ where }\alpha(x,x') = \begin{cases}
    \min \set{1, \frac{\Likelihood\circ\Tracer(t)(x')}{\Likelihood\circ\Tracer(t)(x)}} \quad \text{if }  \Likelihood\circ\Tracer(t)(x)>0 \text{ and }\Likelihood\circ\Tracer(t)(x')>0\\
    0 \qquad\qquad\quad\qquad\qquad\quad \text{otherwise}
  \end{cases}
,        \end{align*}

        \caption{Metropolis--Hastings Transformation}
        \label{fig:mhtrans}
        \end{subfigure}        
        \begin{subfigure}[t]{\textwidth}
        \centering
        \begin{align*}
        \compiler(t) := & \Match t \With \\
        &\quad\left\{\Sample(t)\mapsto \Sample(t),\;  \Return(t)\mapsto \Return(t),\; \Stationary(t_0, \lambda x. t_1) \mapsto \Stationary(t_0, \lambda x. t_1),\right.\\
        &\quad\;\; \Normalize(t) \mapsto  \pi_{-1}\circ\MH(\Normalize(t)), \Let x = t_1 \In t_2 \mapsto \Let x = \compiler(t_1) \In \compiler(t_2),\\
        &\;\left.\quad \Case a \Of \set{(i,x) \implies t_i}_{i\in I} \mapsto \Case a \Of \set{(i,x) \implies \compiler(t_i)}_{i\in I}\right\}
,        \end{align*}
        \caption{Compiler Transformation}
        \label{fig:compilertrans}
        \end{subfigure}
    \caption[]{Here we give a formal description of the compiler $\protect\Phi$ that maps programs in language $\protect\calL_{\Normalize}$ to the programs in language $\protect\langS$. In \cref{fig:tracertrans} we give the $\protect\Tracer$ transformation that modifies the sequencing term to give a joint distribution rather than marginalizing out the intermediate terms.
    The projection function, $\pi_{-1}$, returns the last element of a tuple.
    In \cref{fig:priortrans} we give a transformation that given a $\protect\langN$-term returns the prior term associated with it. In \cref{fig:lhdtrans} we give a description of the function that for a term $\protect t$, represents Radon--Nikodym derivative of $\protect\Tracer(t)$ with respect to the $\protect\Prior\circ\Tracer(t)$. In \cref{fig:mhtrans} we give \defn{independence Trace-MH} algorithm. In \cref{fig:compilertrans} we finally give the complete description of the compiler.}
    \label{fig:compilerdiscr}
\end{figure*}

First, we show that this compiler maps every program in $\langN$ to a program in $\langS$ and is homomorphic in all the features of $\langS$.
\begin{lemma}
\label{lem:compilerprop}
    Let $\compiler$ be the compiler described in \cref{fig:compilerdiscr}.  
    \begin{enumerate}
        \item $\compiler$ maps every $\langN$-program to a $\langS$ program.
        \item $\compiler$ is homomorphic in all constructs of $\langS$. 
    \end{enumerate}  
\end{lemma}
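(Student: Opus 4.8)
The plan is to prove both parts simultaneously by structural induction on the \uprobabilistic{} typing derivation $\Gamma |-1 t : \TypeA$, since a $\langN$-program is exactly such a term with no free variables. Part (2) drops out by inspecting the clauses of \cref{fig:compilertrans}: for each of the five constructs of $\langS$ the compiler reproduces that construct at the head and descends homomorphically into the subterms---explicitly in the $\Let$ and $\Case$ clauses, through the two \uprobabilistic{} arguments in the $\Stationary$ clause, and vacuously in the $\Sample$ and $\Return$ clauses, whose arguments are deterministic terms on which $\compiler$ is the identity. The only fact to record here is that $\compiler$ restricted to deterministic phrases is the identity, which is immediate because deterministic terms contain neither $\Normalize$ nor $\Score$.

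For part (1) I would carry three invariants through the induction: that $\compiler(t)$ (i) contains no occurrence of $\Normalize$ or $\Score$, so it is an $\langS$-phrase; (ii) still satisfies $\Gamma |-1 \compiler(t) : \TypeA$, preserving the type and the \uprobabilistic{} judgment; and (iii) has exactly the free variables of $t$, so a closed term maps to a closed term. For the $\Sample$, $\Return$, $\Let$, $\Case$, and $\Stationary$ clauses the three invariants propagate directly from the induction hypothesis, so the whole weight of the argument falls on the single clause $\Normalize(t) \mapsto \pi_{-1} \circ \MH(\Normalize(t))$.

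To discharge that clause I would first prove, as separate sub-lemmas each by its own structural induction, closure properties of the auxiliary transformations of \cref{fig:tracertrans}, \cref{fig:priortrans}, and \cref{fig:lhdtrans}. Namely: $\Prior \circ \Tracer$ sends a \probabilistic{} term to a \uprobabilistic{} term over the trace type, deleting every $\Score$ (rewriting it to $\Return(())$) while preserving free variables; $\Likelihood \circ \Tracer(t)$ is a deterministic term of type $\Reals$ depending on the trace---syntactically the object that will eventually denote the Radon--Nikodym derivative of $\Tracer(t)$ against $\Prior \circ \Tracer(t)$---so that the acceptance ratio $\alpha$ and hence $\Sample(\BernoulliDist(\alpha(x,x')))$ are well-formed; and $\MH(\Normalize(t))$ is therefore assembled solely from $\Stationary$, $\Let$, $\Case$, $\Sample$, and $\Return$, with the deterministic projection $\pi_{-1}$ recovering a term whose type $\TypeA + 1$ matches that of $\Normalize(t)$. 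Tracking binders through $\MH$ shows its two uses of $\Prior \circ \Tracer(t)$ and its use of $\Likelihood \circ \Tracer(t)$ introduce only the bound $x$ and the let-bound proposal, so invariant (iii) is preserved.

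The main obstacle, I expect, is invariant (i) for this clause in the presence of \emph{nested} $\Normalize$ and $\Score$, which the grammar permits inside the arguments of an inner $\Stationary$ or of a $\Let$. Because $\Prior \circ \Tracer$ leaves inner $\Normalize$ and $\Stationary$ untouched, the clause cannot be read as producing a finished $\langS$-term in one pass; the induction has to be arranged so that $\compiler$ is driven recursively over the output of $\MH$ (equivalently, so that the auxiliary transformations and $\compiler$ are interleaved, with the recursion well-founded on term size). Once this bookkeeping is set up, so that every residual $\Normalize$ is eventually caught by the same clause and rewritten, the remaining cases are routine appeals to the induction hypothesis.
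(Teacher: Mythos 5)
Your proof is correct, and its skeleton is the same as the paper's: part (2) by inspecting the clauses of \cref{fig:compilertrans}, and part (1) by structural induction on \uprobabilistic{} terms, carrying exactly the invariants (no $\Normalize$/$\Score$ in the output, preservation of typing, preservation of free variables) that the paper leaves implicit when it dismisses part (1) as ``an easy fact to verify by inducting on \uprobabilistic{} terms.''

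Where you go beyond the paper is the nested-$\Normalize$ bookkeeping, and you are right that this is a genuine issue rather than an artifact of your setup. Read literally, $\Prior\circ\Tracer$ maps $\Normalize(u)\mapsto\Normalize(u)$ (\cref{fig:priortrans}) and the compiler's $\Stationary$ clause returns $\Stationary(t_0,\lambda x.t_1)$ without descending into $t_0$ or $t_1$; consequently, on a program with nested $\Normalize$, one pass of $\compiler$ emits a term that still contains $\Normalize$, contradicting part (1), and the $\Stationary$ clause satisfies the homomorphism equation $\compiler(\Stationary(t_0,\lambda x.t_1))=\Stationary(\compiler(t_0),\lambda x.\compiler(t_1))$ required by part (2) only when $t_0,t_1$ are already $\Normalize$-free. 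So the figure must be read the way you propose: $\compiler$ recurses into all probabilistic subterms (in particular those of $\Stationary$ and of the argument of $\Normalize$) before $\MH$ is assembled, and the recursion is well-founded because every residual $\Normalize$ occurs inside a proper subterm of the term being compiled. The paper's own proof glosses over this point entirely; your more careful arrangement of the induction is what actually makes the lemma true as stated.
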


We then show $\compiler$ preserves semantics:
\begin{lemma}
\label{lem:preservessemantics}
    Let $\compiler$ be the compiler described in \cref{fig:compilerdiscr} and let $t$ be a $\langN$-program. Then, under the semantics given in \cref{sec:semantics}: 
    \[
    \norm{\denote{t} - \denote{\compiler(t)}}_\tv = 0,
    \]
    i.e., $\compiler$ preserves semantics.
\end{lemma}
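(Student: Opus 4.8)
The plan is to argue by structural induction on the program $t$, using the fact established in \cref{lem:compilerprop} that $\compiler$ is homomorphic in every construct of $\langS$. Since the denotational semantics of $\Sample$, $\Return$, $\Let$, $\Case$, and $\Stationary$ are all defined compositionally from the denotations of their immediate subterms, the inductive hypothesis immediately yields $\norm{\denote{s}-\denote{\compiler(s)}}_\tv = 0$ for each of these constructs once it holds for their subterms. The entire content of the lemma therefore concentrates in the single case $t = \Normalize(u)$, where we must show
\[
\denote{\Normalize(u)} = \denote{\pi_{-1}\circ\MH(\Normalize(u))}.
\]

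First I would establish three auxiliary facts about the transformations in \cref{fig:compilerdiscr}. \emph{Pushforward}: $\denote{\Tracer(u)}$ is an $S$-finite kernel that records all intermediate samples, and its pushforward along $\pi_{-1}$ (which discards the trace and keeps the return value) equals $\denote{u}$; this follows by an inner induction on $u$, since $\Tracer$ modifies only $\Let$, to return the pair $(trace_x, trace_y)$. \emph{Prior}: $\denote{\Prior\circ\Tracer(u)}$ is a probability measure $P$ on traces, because $\Prior\circ\Tracer$ replaces every $\Score(a)$ by $\Return(())$ while leaving the purely probabilistic skeleton intact. \emph{Likelihood}: $\denote{\Likelihood\circ\Tracer(u)}$ is a measurable function $L$ that is the Radon--Nikodym derivative of $\denote{\Tracer(u)}$ with respect to $P$ (cf.\ \cref{rnthm} and the caption of \cref{fig:lhdtrans}), so that $\denote{\Tracer(u)}(dx) = L(x)\,P(dx)$. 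Combining these, the unnormalized posterior over traces is $L\cdot P$, its total mass is $Z = \int L\,dP$, which equals the total mass of $\denote{u}$, and its pushforward along $\pi_{-1}$ is exactly the unnormalized measure appearing in the definition of $\denote{\Normalize(u)}$.

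Next I would analyze the Markov kernel built by $\MH$. When $Z\in(0,\infty)$, it is an independence Metropolis--Hastings kernel whose proposal $P$ is independent of the current state and whose acceptance function is $\alpha$; taking $P$ itself as the dominating measure, the proposal density is $q\equiv 1$ and the target density is $p = L/Z$, so $r(x,y) = L(x)/L(y)$, and \cref{thm:tie98} shows that the $\alpha$ of \cref{fig:mhtrans} is precisely the Metropolis--Hastings acceptance function for the target $\pi \defeq (L\cdot P)/Z$. Hence the kernel satisfies detailed balance (\cref{eq:detailedbalance}) and leaves $\pi$ invariant. It then remains to argue that the chain is ergodic with unique stationary distribution $\pi$ and that its iterates started from $P$ converge to $\pi$, so that $\denote{\Stationary(\Prior\circ\Tracer(u),\lambda x.\,\ldots)}$ returns the sum-type injection of $\pi$; pushing $\pi$ forward along $\pi_{-1}$ recovers the normalized posterior, namely $\denote{\Normalize(u)}$.

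The hard part will be exactly this ergodicity-and-convergence step, together with the degenerate cases, which must be reconciled with the semantics of both $\Stationary$ and $\Normalize$. Two difficulties arise. First, the convergence clause of $\Stationary$ requires $\denote{t_1}^{(n)}(x,\cdot)\to\pi$ for $P$-almost every starting point $x$; but at any $x$ with $L(x)=0$ the acceptance probability is $0$ and the chain is frozen, so I must show that $P$ assigns no mass to $\{L = 0\}$ in the relevant regime (equivalently, that $P$ and $\pi$ agree on which sets are null off a $P$-negligible set), reconciling this with the absolute-continuity hypothesis behind \cref{eq:ergodic}, and invoking a standard ergodicity criterion for the independence sampler to secure convergence from $P$-a.e.\ start. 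Second, when $Z\notin(0,\infty)$ — the unnormalized posterior having total mass $0$ or $\infty$ — the definition of $\denote{\Normalize(u)}$ returns the error point $\delta_{(1,())}$, so I must verify that the MH chain likewise fails to admit a unique limiting measure and thus falls into the ``otherwise'' branch of $\denote{\Stationary}$, keeping both sides equal. Handling these boundary cases carefully is where the real work lies; the remaining verifications are routine bookkeeping over the compositional semantics.
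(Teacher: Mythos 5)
Your proposal follows essentially the same route as the paper's own proof: structural induction plus the homomorphism property reduces everything to the $\Normalize$ case, which is then handled via the $\Tracer$/$\Prior$/$\Likelihood$ decomposition, the Radon--Nikodym lemma, and \cref{thm:tie98} applied to the independence Metropolis--Hastings kernel, with the same case split on whether the normalizing constant lies in $(0,\infty)$ or in $\set{0,\infty}$. One remark: the ``hard part'' you flag---proving a.e.\ convergence to a \emph{unique} limit as required by the semantics of $\Stationary$, rather than mere invariance/reversibility, and handling starting points where the likelihood vanishes---is precisely the step that the paper's own argument (\cref{prop:MHmeas-pres}) also leaves implicit, stopping at reversibility; so your explicit identification of that gap is, if anything, more careful than the published proof.
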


\cref{thm:equivalence} follows immediately from these two lemma.
The proof of \cref{lem:compilerprop} can be found in \cref{appendix:compilerprop}.
The proof of \cref{lem:preservessemantics} occupies the remainder of the section.

\subsubsection{Compiler preserves semantics}
We begin by introducing and analyzing two source-to-source transformations, $\Tracer$ and $\Prior$, and one program transformation, $\Likelihood$, that produces a measurable function. (See \cref{fig:compilerdiscr} for their definition.)

In \cref{fig:tracertrans}, we define the \defn{tracer} transformation, $\Tracer$, 
which produces programs that return all probabilistic choices encountered in evaluation of the program as a tuple.
The behavior of this transformation is straightforward in all cases other than sequences.
On sequences, the $\Tracer$ transformation satisfies the following typing rule:
\begin{equation*}
      \inference*{\Gamma |-p \Tracer(t_0)\colon \TypeA_0\quad \Gamma, x:\TypeA |-p \Tracer(t_1)\colon \TypeA_1}{\Gamma |-p \Tracer(\Let x = t_0 \In t_1):\TypeA_0\times \TypeA_1}.
\end{equation*}
\begin{remark}
  Let $\mu\colon X\rightsquigarrow Y$ and $\nu\colon X\times Y \rightsquigarrow Z$ be $\sigma$-finite kernels.
  For $x\in X$, $A \in \Sigma_Y$, and $B\in \Sigma_Z$, define
  \[
  (\mu \otimes \nu)(x, A\times B) = \int_{A} \int_B\mu(x,dy)\nu(x,y,dz).
  \]
  By \citep[][Lem.~1.17]{Kal:17}, $(\mu \otimes \nu)$ is a $\sigma$-finite kernel. 
  Therefore, while the semantics of a probabilistic term $\Gamma|-p t\colon\TypeA$ is, in general, a $S$-finite kernel, 
  the semantics of $\denote{\Tracer(t)}$ can be taken to be a $\sigma$-finite kernel. 
\end{remark}

In \cref{fig:priortrans}, we define the \defn{prior} transformation, $\Prior$,
which removes all occurrences of $\Score$ terms that are not enclosed within a $\Normalize$ term. 
The next result establishes the key property we require of the resulting semantics:

\begin{lemma}
\label{lem:priorabc}
  Let $\Gamma |-p \Tracer(t):\TypeA$ be a \probabilistic{} term in the language $\langN$. If $\;\Prior\circ\Tracer$ is the program transformation give in \cref{fig:priortrans}, then $\Gamma |-1 \Prior\circ\Tracer(t):\TypeA$ is a \uprobabilistic{} term such that, for all $\gamma,\; \denote{\Tracer(t)}_\gamma \ll \denote{\Prior\circ\Tracer(t)}_\gamma,$ i.e., 
  \begin{equation*}
  \denote{\Prior\circ\Tracer(t)}_{\gamma, A} = 0 \implies \denote{\Tracer(t)}_{\gamma,A} = 0.
  \end{equation*} 
\end{lemma}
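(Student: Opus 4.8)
The plan is to prove both conclusions---that $\Prior\circ\Tracer(t)$ is purely probabilistic and that $\denote{\Tracer(t)}_\gamma \ll \denote{\Prior\circ\Tracer(t)}_\gamma$---simultaneously, by structural induction on $t$. The guiding observation is that $\Tracer$ and $\Prior\circ\Tracer$ are defined by the identical recursion and produce syntactically identical output in every case \emph{except} $\Score$, where $\Tracer$ preserves $\Score(t)$ while $\Prior\circ\Tracer$ replaces it by $\Return(())$. Thus the two denotations differ only through the local substitution of a scaled Dirac measure by the Dirac measure itself, and the whole argument amounts to tracking how this single local change propagates through sequencing and case analysis.

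For the typing conclusion, I would first note that $\Score$ is the only construct that is probabilistic without being purely probabilistic, i.e.\ the only way a term's denotation can fail to be a probability kernel. Since $\Return(())$, $\Sample$, $\Return$, $\Stationary$, and $\Normalize$ are always purely probabilistic, and since the $\Let$ and $\Case$ rules yield a purely probabilistic term whenever their immediate subterms are, an immediate induction gives $\Gamma |-1 \Prior\circ\Tracer(t)\colon\TypeA$. It is essential here that $\Prior$ does not recurse into the argument of $\Normalize$ (which is purely probabilistic regardless of its argument), so the $\Score$ terms it eliminates are precisely those not guarded by a $\Normalize$.

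For absolute continuity, the cases $\Sample$, $\Return$, $\Stationary$, and $\Normalize$ are immediate because the two denotations are literally equal. The $\Score$ base case is the crux of the lemma: $\denote{\Return(())}_\gamma$ is the Dirac measure $\delta_{()}$ on the unit space, whose only null set is $\varnothing$, whereas $\denote{\Score(a)}_\gamma$ merely places mass $\abs{\denote{a}_\gamma}$ on $\set{()}$, so $\denote{\Score(a)}_\gamma \ll \delta_{()}$ trivially. For $\Case a \Of \set{(i,x)\Rightarrow u_i}$, determinism of the scrutinee $a$ means both transformations select the same branch $\denote{a}_\gamma=(i,d)$, so the claim reduces to the inductive hypothesis applied to $u_i$ in the environment extended by $d$.

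The only case requiring genuine work, and the expected main obstacle, is $\Let x = t_0 \In t_1$. Both transformations denote the $\otimes$-composition (in the sense of the Remark) of the kernel for $t_0$ with the kernel for $t_1[x\backslash\pi_{-1}(\cdot)]$, so I would isolate and prove a general lemma: if $\mu(\gamma,\cdot)\ll\mu'(\gamma,\cdot)$ and $\nu(\gamma,y,\cdot)\ll\nu'(\gamma,y,\cdot)$ for all $y$, then $(\mu\otimes\nu)(\gamma,\cdot)\ll(\mu'\otimes\nu')(\gamma,\cdot)$. The argument fixes $\gamma$ and a measurable $C$ with $(\mu'\otimes\nu')(\gamma,C)=\int\nu'(\gamma,y,C_y)\,\mu'(\gamma,dy)=0$, where $C_y$ is the $y$-section; nonnegativity forces $\nu'(\gamma,y,C_y)=0$ for $\mu'(\gamma,\cdot)$-a.e.\ $y$, so the exceptional set $N$ is $\mu'$-null and hence $\mu$-null by the first hypothesis, while off $N$ the second hypothesis gives $\nu(\gamma,y,C_y)=0$; splitting $(\mu\otimes\nu)(\gamma,C)=\int_N\nu(\gamma,y,C_y)\,\mu(\gamma,dy)+\int_{N^\complement}\nu(\gamma,y,C_y)\,\mu(\gamma,dy)$ then yields $0$. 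Measurability of the sections and validity of the iterated integrals are supplied by the $\sigma$-finiteness of the $\Tracer$-kernels noted in the Remark, so that the cited composition lemma of Kallenberg applies. Instantiating this lemma with the inductive hypotheses for $t_0$ and for $t_1$ in the context $\Gamma, x\colon\TypeA$ closes the $\Let$ case and completes the induction.
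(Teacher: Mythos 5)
Your proof is correct, and its skeleton coincides with the paper's own induction: the typing claim via the observation that $\Score$ is the only construct that is probabilistic but not purely probabilistic, a trivial $\Score$ base case (both you and the paper note that the only null set of $\denote{\Return(())}_\gamma$ on the unit space is $\varnothing$), and a $\Case$ step that reduces to the inductive hypothesis because the deterministic scrutinee selects the same branch in both programs. The genuine difference is in the $\Let$ case, which is the only substantive step. The paper invokes its Radon--Nikodym theorem for kernels (\cref{rnthm}): from the inductive hypothesis it extracts densities $f_0$ of $\denote{\Tracer(t_0)}_{\gamma}$ and $f_1$ of $\denote{\Tracer(t_1)}_{\gamma,x}$ with respect to the corresponding $\Prior$ kernels, represents the sequenced $\Tracer$ semantics as the integral of $f_0 f_1$ against the sequenced $\Prior$ semantics, and concludes absolute continuity because integrals of measurable functions over null sets vanish. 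You instead prove an abstract lemma that $\otimes$-composition of kernels preserves absolute continuity, by a sections argument: if $(\mu'\otimes\nu')(\gamma,C)=0$ then $\nu'(\gamma,y,C_y)=0$ for $\mu'$-a.e.\ $y$, the exceptional set is $\mu$-null by $\mu\ll\mu'$, and off it $\nu(\gamma,y,C_y)=0$ by $\nu\ll\nu'$, so the iterated integral for $(\mu\otimes\nu)(\gamma,C)$ vanishes. Both arguments are sound; yours is more elementary and runs on marginally weaker hypotheses, since it avoids the Borel-space assumption built into \cref{rnthm} and needs only the $\sigma$-finiteness of the $\Tracer$ kernels (the paper's Remark, via Kallenberg) that makes the iterated-integral representation of $\otimes$ legitimate. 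What the paper's density route buys in exchange is continuity with what follows: the derivatives $f_0,f_1$ it manipulates are precisely the objects that the subsequent lemma identifies as $\denote{\Likelihood\circ\Tracer(\cdot)}$, so the paper's proof of absolute continuity doubles as scaffolding for the Metropolis--Hastings acceptance-ratio computation, whereas under your approach that density computation would still have to be carried out separately.
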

The proof can be found in \cref{appendix:priorabc}.

Finally, in \cref{fig:lhdtrans},
we define the likelihood transformation,  $\Likelihood$,
which produces a (deterministic) measurable function and 
satisfies the following typing rule:
\begin{equation*}
      \inference*{\Gamma |-p \Tracer(t)\colon \TypeA}{\Gamma |-d \Likelihood\circ\Tracer(t)\colon \TypeA \to \NNReals}.
\end{equation*}
The following lemma
establishes that the measurable function obtained by the likelihood transformation 
is the Radon--Nikodym derivative of (i) the semantics of tracer transformation with respect to (ii) the semantics of the prior transformation. 
\begin{lemma}
Let $\Gamma |-p t:\TypeA$ 
and $\gamma\in\denote{\Gamma}$. 
Then $\denote{\Likelihood\circ \Tracer(t)}_{\gamma}$ is the Radon--Nikodym derivative of $\denote{\Tracer(t)}_{\gamma}$ with respect to $\denote{\Prior\circ\Tracer(t)}_{\gamma}$.
\end{lemma}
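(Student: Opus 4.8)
The plan is to argue by structural induction on the \probabilistic{} term $t$, following exactly the case analysis that defines $\Likelihood$ in \cref{fig:lhdtrans}. By \cref{lem:priorabc} we already know $\denote{\Tracer(t)}_\gamma \ll \denote{\Prior\circ\Tracer(t)}_\gamma$, so \cref{rnthm} guarantees that a Radon--Nikodym derivative exists and is $\denote{\Prior\circ\Tracer(t)}_\gamma$-almost everywhere unique. It therefore suffices, by that uniqueness, to verify for every measurable $A$ the single defining identity
\[
\denote{\Tracer(t)}_{\gamma, A} = \int_A \denote{\Likelihood\circ\Tracer(t)}_\gamma \; d\denote{\Prior\circ\Tracer(t)}_\gamma ;
\]
any function satisfying this for all $A$ must coincide with the derivative up to a null set.

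For the base cases $\Sample(a)$, $\Return(a)$, $\Stationary(t_0,\lambda x.t_1)$, and $\Normalize(v)$, the $\Prior$ transformation of \cref{fig:priortrans} acts as the identity, so $\denote{\Tracer(t)}_\gamma = \denote{\Prior\circ\Tracer(t)}_\gamma$ and the derivative is the constant $1$, exactly the value assigned by $\Likelihood$. For $\Score(a)$, the tracer semantics is the measure placing mass $\abs{\denote{a}_\gamma}$ on $\set{()}$, whereas $\Prior\circ\Tracer(\Score(a)) = \Return(())$ denotes $\delta_{()}$; since $\Likelihood\circ\Tracer(\Score(a)) = \abs{\denote{a}_\gamma}$, integrating this constant against $\delta_{()}$ recovers the tracer measure on both elements of $\Sigma_{\denote{\Unit}}$. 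The $\Case$ construct is a deterministic dispatch: evaluating $\denote{a}_\gamma = (i,d)$ selects branch $i$, all three transformations descend into $u_i$ under the same substitution, and the identity for $t$ reduces immediately to the identity for $u_i$, which holds by the induction hypothesis.

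The crux is the sequencing case $\Let x = t_0 \In t_1$. Here the tracer semantics is the joint kernel $\denote{\Tracer(t_0)}_\gamma \otimes \denote{\Tracer(t_1)}$ in the sense of the $\otimes$ composition recalled in the remark above (with the bound variable $x$ instantiated to the last coordinate of the first trace via $x \backslash \pi_{-1}(\cdot)$), and likewise for $\Prior\circ\Tracer$; the likelihood is the pointwise product of the two factor likelihoods. Writing $f_0$ and $f_1$ for the densities supplied by the induction hypotheses for $t_0$ and $t_1$, I would verify the defining identity first on measurable rectangles $A\times B$: substituting the two inductive identities into the iterated integral defining $\otimes$ and invoking Tonelli (valid since all integrands are nonnegative) shows that $\denote{\Tracer(t)}_{\gamma, A\times B}$ equals $\int_{A\times B} f_0 f_1 \; d(\denote{\Prior\circ\Tracer(t_0)}\otimes\denote{\Prior\circ\Tracer(t_1)})$. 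Since the rectangles form a $\pi$-system generating the product $\sigma$-algebra and both sides are countably additive and $\sigma$-finite on it (the latter by \citep[][Lem.~1.17]{Kal:17}, as noted in the remark), a monotone-class argument extends the identity to all measurable $A$, completing the induction.

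I expect the $\Let$ case to be the main obstacle. The delicate points are (i) establishing that the density of a $\otimes$-composition of kernels factors as the product of the factor densities---in effect a chain rule for Radon--Nikodym derivatives in the kernel setting, which must also track the joint measurability of $(y,z)\mapsto f_0(y)f_1(y,z)$ in the conditioning argument---and (ii) threading the substitution $x \backslash \pi_{-1}(\cdot)$ correctly through both the semantics and the induction hypothesis, so that $f_1$ is evaluated at the realized value of the bound variable rather than at a fixed one. The $\sigma$-finiteness guaranteed by the remark is precisely what licenses both the use of Tonelli and the extension from rectangles to the full product $\sigma$-algebra.
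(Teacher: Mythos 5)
Your proof follows essentially the same route as the paper's: structural induction on \probabilistic{} terms, with $\Score$ as the only nontrivial base case, deterministic dispatch for the $\Case$ construct, and verification of the defining identity on measurable rectangles for $\Let$ by substituting both inductive hypotheses into the iterated integral. The only difference is that you spell out steps the paper leaves implicit---almost-everywhere uniqueness via \cref{rnthm}, Tonelli, and the $\pi$-system/monotone-class extension from rectangles to the full product $\sigma$-algebra---which adds rigor without changing the argument.
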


\begin{proof}
  We proceed via induction on the grammar of \probabilistic{} terms. 
  The only interesting base case is $\Score(t)$.
  First, notice that
  \[ \denote{\Likelihood\circ\Tracer(\Score(t))}_\gamma = \abs{\denote{t}_\gamma}.\]
  Then, for $A\in \Sigma_{\denote{\TypeA}}$,
  \begin{align*}
      &\denote{\Tracer(\Score(t))}_{\gamma, A} = \denote{\Score(t)}_{\gamma, A}\\
      & \qquad = \int_A \abs{\denote{t}_\gamma}\denote{\Return(())}_{\gamma,dx}\\ 
      &\qquad = \int_A \abs{\denote{t}_\gamma}\denote{\Prior\circ\Tracer(\Score(t))}_{\gamma,dx}
  \end{align*}
  Thus, $\denote{\Likelihood\circ\Tracer(\Score(t))}_{\gamma}$ is the Radon--Nikodym derivative with respect to $\Prior\circ\Tracer(\Score(t))_{\gamma}$. 
  
  For the inductive step, we first consider the $\Let$ construct.
  Let $A \times B \in \Sigma_{\denote{\TypeA}} = \Sigma_{\denote{\TypeA_0}\times \denote{\TypeA_1}8}$ be a rectangle. Then
\begin{align*}
  & \denote{\Tracer(\Let x = t_0 \In t_1)}_{\gamma, A\times B} \\ 
  & = \int_{\denote{\TypeA_0}}\int_{\denote{\TypeA_1}} \denote{\Return(x', y')}_{\gamma,x', y', A\times B}\\
  &\;\;\;\quad \qquad \qquad\denote{\Tracer(t_1)}_{\gamma, x', dy'}\denote{\Tracer(t_0)}_{\gamma, dx'} \\
  &= \int_{A} \int_{B}\denote{\Tracer(t_1)}_{\gamma, x', dy'}\denote{\Tracer(t_0)}_{\gamma, dx'} \\ 
  &= \int_{A} \int_{B}\denote{\Likelihood\circ\Tracer(t_0)(x')}_{\gamma, x'}\denote{\Prior\circ\Tracer(t_0)}_{\gamma, dx'}\\
  &\;\qquad\denote{\Likelihood\circ\Tracer(t_1)(y')}_{\gamma, x', y'}\denote{\Prior\circ\Tracer(t_1)}_{\gamma, x', dy'} \\
    &= \int_{A\times B}\denote{\Likelihood\circ\Tracer(\Let x= t_0\In t_1)(x',y')}_{\gamma, x',y'} \\
  &\;\quad\qquad\denote{\Prior\circ\Tracer(\Let x = t_0 \In t_1)}_{\gamma, dx'\times dy'}.
\end{align*}
The third equality follows from the inductive hypothesis. 

It remains to consider $\Case$ statements:
\begin{align*}
  & \denote{\Tracer(\Case a \Of \set{t_i}_{i\in I})}_{\gamma, A} \\ 
  & = \denote{\Case a \Of \set{(i,x) \implies \Tracer(t_i)}_{i\in I})}_{\gamma, A}\\
  & = \denote{\Tracer(t_i)}_{\gamma,v, A} \quad \text{ if } \denote{a}_\gamma = (i,v)\\
  & = \begin{aligned}
  &\int_{A} \denote{\Likelihood\circ\Tracer(t_i)}_{\gamma, v, x}\\
  &\qquad \denote{\Prior\circ\Tracer(t_i)}_{\gamma,v, dx} 
  \end{aligned}\quad \text{ if } \denote{a}_\gamma = (i,v)\\
  & = \begin{aligned}
  &\int_{A} \denote{\Likelihood\circ\Tracer(\Case a \Of \set{(i,x) \implies t_i}_{i\in I})}_{\gamma, x}\\
  &\qquad \denote{\Prior\circ\Tracer(\Case a \Of \set{(i,x) \implies t_i}_{i\in I})}_{\gamma,dx}.
  \end{aligned}
\end{align*}
The third equality follows from the inductive hypothesis. 
\end{proof}

Given the properties of the program transformations, we show in the following proposition that the $\Normalize$ terms can be compiled into a $\Stationary$ term by constructing the Metropolis--Hastings kernel with the correct stationary distribution.

\begin{proposition}
\label{prop:MHmeas-pres}
 Let $\Gamma |-1 \Normalize(t)$ 
 and $\gamma\in \denote{\Gamma}$. Then
 \begin{equation*}
    \norm{\denote{\Normalize(t)}_{\gamma} -\denote{\pi_{-1}\circ\MH(\Normalize(t))}_\gamma}_\tv = 0,
 \end{equation*}
 i.e., the $\MH$-transformation preserves semantics.
\end{proposition}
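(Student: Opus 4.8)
The plan is to recognize $\MH(\Normalize(t))$ as an independence-sampler Metropolis--Hastings chain on the space of execution traces whose normalized target is the posterior over traces, and then to transport this law back to the return type through $\pi_{-1}$. Write $\mathcal{P} = \denote{\Prior\circ\Tracer(t)}_\gamma$ for the probability measure that serves as both proposal and initial distribution, $\mathcal{T} = \denote{\Tracer(t)}_\gamma$ for the $\sigma$-finite trace measure, $L = \denote{\Likelihood\circ\Tracer(t)}_\gamma$ for the likelihood, and $Z = \abs{\denote{t}_\gamma}$ for the total mass of $\mathcal{T}$. By \cref{lem:priorabc}, $\mathcal{T}$ is absolutely continuous with respect to $\mathcal{P}$, and by the preceding likelihood lemma $L$ is the Radon--Nikodym derivative $\mathrm{d}\mathcal{T}/\mathrm{d}\mathcal{P}$. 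A preliminary step, proved by induction on the grammar alongside the correctness of $\Tracer$, is that $\pi_{-1}$ returns a trace's final component, i.e., the program's return value, so that the pushforward satisfies $(\pi_{-1})_*\mathcal{T} = \denote{t}_\gamma$. This reduces the proposition to showing that the $\Stationary$ term denotes the normalized trace law $\mathcal{T}/Z$ (under the left injection) when $Z\in(0,\infty)$, and the error mass $\DiracDist_{(1,())}$ otherwise.

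For the generic case $Z\in(0,\infty)$, set $\bar\pi = \mathcal{T}/Z = (L/Z)\,\mathcal{P}$, a probability measure, and let $K$ be the kernel denoted by the $\lambda$-abstraction in \cref{fig:mhtrans}. I would establish invariance of $\bar\pi$ by invoking \cref{thm:tie98} with dominating measure $\mathcal{P}$, target density $L/Z$, and proposal density $q\equiv 1$ (since the proposal $\mathcal{P}$ does not depend on the current state). Tierney's ratio then becomes $r(x,y) = L(x)/L(y)$ and his set $R$ collapses to $\set{(x,y) : L(x)>0,\ L(y)>0}$, so the acceptance $\min\set{1, r(y,x)}$ he prescribes equals $\min\set{1, L(x')/L(x)}$ after renaming the proposal $y$ to $x'$, which is exactly the $\alpha$ of \cref{fig:mhtrans}. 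Hence $K$ satisfies detailed balance \cref{eq:detailedbalance} against $\bar\pi$, giving invariance.

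The remaining step is ergodicity: I must show $\bar\pi$ is the \emph{unique} invariant law and that $K^n(x,\cdot)\to\bar\pi$ in total variation for $\mathcal{P}$-almost every starting state, so that the $\exists!\mu$ branch of the $\Stationary$ semantics fires and returns $\bar\pi$. Since the proposal is the fixed measure $\mathcal{P}$ and $\bar\pi\ll\mathcal{P}$ with $L>0$ throughout the support of $\bar\pi$, the chain is $\bar\pi$-irreducible and aperiodic; together with invariance this yields Harris ergodicity and total-variation convergence from every state with $L(x)>0$ (a $\bar\pi$-full set), as in \cref{eq:ergodic}, and genuine uniform ergodicity \cref{eq:unifergodic} when $L$ is bounded. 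Pushing $\bar\pi$ forward through $\pi_{-1}$ and using $(\pi_{-1})_*\mathcal{T} = \denote{t}_\gamma$ identifies the result with the normalized posterior $\denote{\Normalize(t)}_\gamma$, so the total-variation distance is zero. For the boundary cases $Z = 0$ and $Z = \infty$ I would verify that both denotations collapse to $\DiracDist_{(1,())}$: this is immediate from the definition of $\Normalize$, while on the $\MH$ side it follows because no normalizable invariant law exists and the $\exists!\mu$ condition fails.

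The crux, and the step I expect to demand the most care, is convergence from $\mathcal{P}$-almost every initial state. The obstruction is the set $\set{x : L(x) = 0}$: there $\alpha(x,\cdot)\equiv 0$, so the chain is absorbed and $K^n(x,\cdot)$ converges to $\DiracDist_x$ rather than to $\bar\pi$. Standard independence-sampler theory supplies convergence only from $\bar\pi$-almost every state, and $\set{L=0}$ is $\bar\pi$-null yet may carry positive $\mathcal{P}$-mass. Reconciling this with the $\mathcal{P}$-almost-everywhere quantifier of the $\Stationary$ semantics --- whether by showing $\mathcal{P}(\set{L=0}) = 0$ from the structure of $\Prior\circ\Tracer$, or by verifying that the $\exists!\mu$ clause and the error branches of $\Normalize$ fail on exactly the same inputs --- is the technical heart of the argument.
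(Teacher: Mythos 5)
Your argument follows the paper's own route essentially step for step up to invariance: the same reduction through $\pi_{-1}$ to the trace space, the same identification of $\mathcal{P}=\denote{\Prior\circ\Tracer(t)}_\gamma$ as both proposal and initial law with $L=\denote{\Likelihood\circ\Tracer(t)}_\gamma$ as the Radon--Nikodym density (via \cref{lem:priorabc} and the likelihood lemma), the same appeal to \cref{thm:tie98} to obtain detailed balance for $\bar\pi=\mathcal{T}/Z$, and the same case split on $Z\in(0,\infty)$ versus $Z\in\set{0,\infty}$. Where you go beyond the paper is instructive: the paper's treatment of the case $Z\in(0,\infty)$ \emph{ends} with the assertion that the Metropolis--Hastings kernel is reversible with respect to $\denote{\Tracer(t)}_\gamma/Z$, whereas the semantics of $\Stationary$ demands strictly more, namely a unique $\mu$ to which the $n$-step kernels converge for $\mathcal{P}$-almost every start state. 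You are right that invariance alone does not deliver this, and right to call the convergence step the technical heart.

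The obstruction you flag is genuine and, with the acceptance probability as written in \cref{fig:mhtrans}, it cannot be removed by either of the reconciliations you propose. Take $t \equiv \Let x = \Sample(\BernoulliDist(\nfrac{1}{2})) \In \Let y = \Score(x) \In \Return(x)$. Then $Z=\nfrac{1}{2}\in(0,\infty)$ and $\denote{\Normalize(t)}$ is the left-injected point mass at $1$, yet $\mathcal{P}$ puts mass $\nfrac{1}{2}$ on the trace with $x=0$, where $L=0$. From that trace every proposal is rejected ($\alpha\equiv 0$), and from the trace with $x=1$ every proposal is either a self-transition or rejected, so the compiled kernel is the identity kernel: the $n$-step laws from the two atoms converge to two \emph{different} limits, the $\exists!\,\mu$ clause of the $\Stationary$ semantics fails, and $\denote{\pi_{-1}\circ\MH(\Normalize(t))}$ is $\delta_{(1,())}$, at total-variation distance $1$ from $\denote{\Normalize(t)}$. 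So $\Prior\circ\Tracer$ does not force $\mathcal{P}(\set{L=0})=0$, the error branches of $\Normalize$ and $\Stationary$ do not fire on the same inputs, and \cref{prop:MHmeas-pres} fails as stated; the paper's proof passes silently over exactly the step you isolated. The natural repair is to observe that detailed balance (\cref{eq:detailedbalance}) places no constraint on $\alpha(x,\cdot)$ when $L(x)=0$, so the compiler may instead set $\alpha(x,x')=1$ whenever $L(x)=0$ and $L(x')>0$; with that change zero-likelihood states are no longer absorbing, the chain enters $\set{L>0}$ in one step with probability $\mathcal{P}(\set{L>0})>0$, and the independence-sampler argument you sketch (irreducibility, aperiodicity, invariance, Harris recurrence) then yields the almost-everywhere convergence the semantics requires.
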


\begin{proof}
First, note that it is straightforward to verify that
$$
    \denote{\Normalize(t)}_\gamma = \denote{\pi_{-1}\circ\Normalize(\Tracer(t))}_\gamma.
$$

Further, $\denote{\Normalize(\Tracer(t))}$ is a probability kernel that maps a context to a measure on the sum type $\denote{\TypeA + 1}$. 
We can rewrite 
\begin{align*}
\denote{\Normalize(\Tracer(t))} =&  \restrict{\denote{\Normalize(\Tracer(t))}}{\set{(0,u):u\in \denote{\TypeA}}} + \\ &\qquad \restrict{\denote{\Normalize(\Tracer(t))}}{\set{(1,())}}.
\end{align*}

We now show that
\begin{align*}
\label{eq:normtracer}
&\denote{\Normalize(\Tracer(t))} = \\
&[[\Stationary\left(\Prior\circ\Tracer(t),
        \begin{aligned}
            &\lambda x. \Let x' = \Prior\circ\Tracer(t) \In  \\
            &\;\Case \Sample(\BernoulliDist(\alpha(x, x'))) \Of \\
            &\qquad\quad\;\; (0,T) \Rightarrow \Return(x') \\
            &\quad\quad\;\;\; \mid (1,F) \Rightarrow \Return(x))
        \end{aligned} \right)]]
\end{align*}

\emph{Case 1:} $\denote{\Tracer(t)}_{\gamma, \denote{\TypeA}} \in \set{0,\infty}$:\\
  This is the case when $\denote{\Tracer(t)}$ is either a null measure or an infinite measure. Recall that $\Normalize(\Tracer(t))$ satisfies
  \begin{align*}
    \denote{\Normalize(\Tracer(t))}_{\gamma}(A) & = \restrict{\denote{\Normalize(\Tracer(t))}_\gamma}{\set{(1,())}}(A) \\
    & = \begin{cases}
    1 \quad\text{if } (1,())\in A\\
    0 \quad\text{otherwise}
     \end{cases}
  \end{align*}
  When $\denote{\Normalize(\Tracer(t))}_\gamma$ is a null measure, $\alpha(x,x') = 0$.
  Thus, there is no \emph{unique} probability measure that leaves the distribution invariant, thus $\Stationary$ returns a point measure on $(1,()).$
  When $\denote{\Normalize(\Tracer(t))}_\gamma$ is an infinite measure, the MH kernel is reversible with respect to the specified $\sigma$-finite measure. Since there is no \emph{probability} measure that leaves this distribution invariant, $\Stationary$ returns a point measure on $(1,()).$

\emph{Case 2:} $\denote{t}_{\gamma, \denote{\TypeA}} \in (0,\infty)$\\
Semantically, the Markov transition kernel in the stationary term satisfies
\begin{align*}
  \left\llbracket
  \begin{aligned}
  &\Let x' = \Prior(\Tracer(t)) \In  \\
    &\Case \Sample(\BernoulliDist(\alpha(x, x'))) \Of \\
  &\qquad (0,T) \Rightarrow \Return(x') \\
  &\quad\; \mid (1,F) \Rightarrow \Return(x))
  \end{aligned}\right\rrbracket_{\gamma, v, dv'}  \\
  = \begin{aligned}
    &\alpha(v, v')\denote{\Prior(t)}_{\gamma,dv'} + \\  
    &\quad\delta_v(dv')\int_{\denote{\TypeA}}(1-\alpha(v,y))\denote{\Prior(t)}_{\gamma}(dy),
  \end{aligned}
\end{align*}
where for $$R = \set{(x,x'): \forall y\in \set{x,x'}\Likelihood\circ\Tracer(t)(y)\in (0,\infty)}$$
$$\alpha(x, x') \mapsto \begin{cases}
    \min \set{1, \frac{\Likelihood\circ\Tracer(t)(x')}{\Likelihood\circ\Tracer(t)(x)}} \quad \text{if }  (x,x')\in R\\
    0 \qquad\qquad\quad\qquad\qquad\quad \text{otherwise}
  \end{cases}$$
  Note that the Markov kernel constructed by the program transformation is exactly the Markov kernel associated with the Metropolis--Hastings algorithm discussed in \cref{sec:MCT}. Here the proposal kernel is 
  $\denote{\Prior(\Tracer(t))}_\gamma$.
  The acceptance probability $\alpha(x,x')$ satisfies the hypotheses of 
  \cref{thm:tie98}, and thus Metropolis--Hastings kernel is reversible with respect to $\frac{\denote{\Tracer(t)}_\gamma}{\denote{\Tracer(t)}_{\gamma, \denote{\TypeA}}}$.
\end{proof}

\begin{proof}[Proof of \cref{lem:preservessemantics}]
We proceed by induction on \uprobabilistic{} terms, 
since a program is a \uprobabilistic{} without free variables. 
The only interesting base case in the $\Normalize(t)$ term. 
\cref{prop:MHmeas-pres} shows that $\compiler$ preserves semantics. 

\emph{Case 1:} $\Let$ constructor:
\begin{align*}
     \denote{\compiler(\Let x = t_0 \In t_1)}_\gamma & = \denote{\Let x = \compiler(t_0) \In \compiler(t_1)}_\gamma \\
    & = \int \denote{\compiler(t_1)}_{\gamma, x}  \denote{\compiler(t_0)}_{\gamma,dx} \\
    & = \int \denote{t_1}_{\gamma, x}  \denote{t_0}_{\gamma,dx} \\
    & = \int \denote{\Let x = t_0 \In t_1)}_{\gamma,dx},
\end{align*}
where third inequality is due to the inductive hypothesis.

\emph{Case 2}: $\Case$ constructor:
\begin{align*}
    & \denote{\compiler(\Case a \Of \set{(i,x) \implies t_i}}_{\gamma,A}  =\\
    &\qquad\qquad   = \denote{\Case a \Of \set{(i,x) \implies \compiler(t_i)}}_{\gamma,A} \\
    &\qquad\qquad   = \denote{\compiler(t_i)}_{\gamma, v,A} \quad \text{ if }\denote{a}_\gamma = (i,v) \\
    &\qquad\qquad   = \denote{t_i}_{\gamma, v,A} \quad \text{ if }\denote{a}_\gamma = (i,v) \\
    &\qquad \qquad  = \denote{\Case a \Of \set{(i,x) \implies t_i}}_{\gamma, A} ,
\end{align*}
where second inequality is due to the inductive hypothesis.
\end{proof}

\section{Approximate compilation of probabilistic programs} 
\label{sec:stationary}

\citet{SWY+:16} proposed a probabilistic programming language with constructs for $\Sample$, $\Score$ and  $\Normalize$ where the semantics of the language assumes an ideal implementations for the $\Normalize$ construct. As we saw in \cref{sec:semantics}, to compute the \uprobabilistic{} term $\Gamma |-1 \Normalize(t):\TypeA$, we need to compute the normalization factor $\denote{t}_{\gamma, \denote{\TypeA}}$; This is computationally intractable. In \cref{sec:semantics}, we introduced a new programming construct $\Stationary$ that takes a description of an initial distribution and a transition kernel, and represents the limiting distribution, if there exists one, of the associated Markov chain.  \cref{thm:equivalence} tells us that for a language with $\Stationary$, the constructs $\Normalize$ and $\Score$ are eliminable. Even though computing the stationary distribution of a Markov chain is also computationally intractable, if the Markov chain specified is \defn{ergodic}, we can iterate the Markov kernel starting from the specified initial distribution to approximate the stationary distribution. The error associated with this approximation depends on rate at which the Markov chain is converging.

In this section we concretely define this approximate compilation scheme based on iteration and state the assumption under which the Markov chain converges. A similar iteration scheme was also given in \citet{AYC:18} to approximate the fixpoint of a Markov kernel.  
In \cref{thm:quantbound} we show that if the Markov chains represented by the $\Stationary$ terms of a program in the language $\langS$ were uniformly ergodic, then we can give a quantitative bound on the error associated with this approximate compilation scheme---even when we allow for ``nested'' $\Stationary$ terms.  %

\subsection[]{Approximate implementation for $\Stationary$ terms}

In this section we give a simple approximate compilation scheme for $\Stationary$ terms similar to the approximation scheme proposed by \citet{AYC:18}. 
First, we inductively define syntactic sugar $\Iterate$ as follows:
\begin{align*}
  \Iterate^0(t_0,\lambda x. t_1) &:= t_0\\
  \Iterate^{N}(t_0, \lambda x. t_1) &:= \Let x = \Iterate^{N-1}(t_0,\lambda x. t_1) \In t_1
\end{align*}
The semantics of the approximate program transformation is given as:
\begin{align*}
  \denote{\Iterate^N(t_0, \lambda x. t_1)}_\gamma = \int \denote{t_1}^N_{\gamma,x} \denote{t_0}_\gamma(dx)
\end{align*}

Given these syntactic sugar, now we give the program transformation $\phi$ as follows:
\[
\phi(\Stationary(t_0, \lambda x. t_1), n) \defeq  \Iterate^n(t_0, \lambda x. t_1)
\]

We make the following assumption on the semantics of the Markov chain specified by the $\Stationary$ terms:
\begin{assumption}
  \label{assump:erg}
  For a term $\Gamma |-1 \Stationary(t_0, \lambda x. t_1):\TypeA+1$ in the language, for all $\gamma$, the transition kernel, $\denote{t_1}_\gamma$, and the initial distribution, $\denote{t_0}_\gamma$, specifies an ergodic Markov chain with stationary distribution $\pi_\gamma\in \calM{\denote{\TypeA}}$ such that:
  \[
  \lim_{N\to \infty}\norm{\int_{\denote{\TypeA}}\denote{t_1}_{\gamma, x}^N(\cdot) \denote{t_0}_{\gamma, dx} - \pi_\gamma(\cdot)}_\tv = 0
  \]
\end{assumption}

\begin{remark}
  Note that if \cref{assump:erg} does not hold, i.e., for the  $\Stationary$ term $\Gamma |-1 \Stationary(t_0, \lambda x. t_1'):\TypeA+1$, there is some $\gamma$, such that $\denote{t_1}_\gamma$ is not ergodic, then $\denote{\Stationary(t_0, \lambda x. t_1')}_\gamma$ is a point measure on the $\set{(1,())}$. But the program transformation $\denote{\phi(\Stationary(t_0, \lambda x. t_1)}_\gamma$ is still a probability measure on $\denote{\TypeA}$.
\end{remark}

Under \cref{assump:erg}, for a program with a single $\Stationary$ term, we know by \cref{eq:ergodic} that the iteration scheme asymptotically converges to the invariant distribution. But if we allow for possibly nested $\Stationary$ terms, there is no guarantee that the approximate compilation scheme converges to something meaningful. This problem is highlighted below. 
\begin{problem}
  \label{problem:1}
  One of the main challenges when studying approximate implementation of the $\Stationary$ construct is that it is not continuous, i.e., for some term $\Gamma |-1 \Stationary(t_0, \lambda x.t_1):\TypeA+1$ if we know that $\denote{t_1}_{\gamma, x}$ is an ergodic kernel that has a unique stationary distribution, it is possible to construct an approximate implementation of the Markov transition kernel $\lambda x. t_1'$ such that 
  \[
  \exists \delta\in(0,1)\forall \gamma,x. \norm{\denote{t_1}_{\gamma,x} - \denote{t_1'}_{\gamma,x}} \leq \delta,
  \] but the Markov kernel $\denote{t_1'}_{\gamma,x}$ does not have a stationary distribution. Such an example is given in Proposition 1 of \citet{RRS:98}.
\end{problem}

To side step the issue stated in \cref{problem:1}, we need to make further semantic restrictions on the Markov chains specified by the $\Stationary$ terms. We identify that if the transition kernel given to the $\Stationary$ construct is \emph{uniformly ergodic}, then $\Stationary$ construct is continuous. 

We define this semantic restriction as relation that is assumed to be true when we give the quantitative error bounds to probabilistic program.
\begin{definition}
  Let $R \subseteq \calM(\TypeA)^{\denote{\TypeA}} \times \calM{\denote{\TypeA}} \times \NNReals \times [0,1)$ be a relation such that $R(P, \pi, C, \rho)$ holds if and only if $P$ is a uniformly ergodic Markov chain with stationary distribution $\pi$ such that for all $x\in \denote{\TypeA}$
  \[
    \norm{P^N(x, \cdot) - \pi(\cdot)}_{\tv} \leq C \rho^N
  \]
\end{definition}

\subsection{Quantitative error bounds}
Here, we derive the quantitative error bounds associated with the approximation compilation scheme of probabilistic programs where each $\Stationary$ term specifies a Markov chain that is uniformly ergodic. The main theorem we prove in this section is the following:
\begin{theorem}[Quantitative error bound for probabilistic programs]
  \label{thm:quantbound}
  Let $P$ be a probabilistic program in the language $\langS$. Let $\set{\Stationary(t_{0i}, \lambda x.t_{1i})}_{i\in I}$ be the set of all stationary terms in the program $\varnothing |-1 P:\TypeB$ such that for all $\gamma$, there exist constants $\set{C_i}$ and $\set{\rho_i}$ such that $R(\denote{t_{1i}}_{\gamma}, \denote{\Stationary(t_{0i},\lambda x. t_{1i})}_\gamma, C_i, \rho_i)$ holds. 
  Let $P'$ be a program where for all $i\in I$ and $N_i\in \Nats$, $\Stationary (t_{0i}, \lambda x.t_{1i})$ is replaced by $\phi(\Stationary (t_{0i}, \lambda x.t_{1i}), N_i)$. Then, there exist constants $\set{C_i'}_{i\in I}$ such that
  \[
    \norm{\denote{P}_\gamma - \denote{P'}_\gamma}_\tv \leq \sum_{i\in I}C_i'\rho_i^{N_i}.
  \]
\end{theorem}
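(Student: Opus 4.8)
The plan is to prove, by structural induction on \uprobabilistic{} terms, the following strengthening of the claim: for every \uprobabilistic{} term $\Gamma |-1 t:\TypeA$, writing $t'$ for the term obtained by replacing each stationary subterm $\Stationary(t_{0i},\lambda x.t_{1i})$ occurring in $t$ by $\phi(\Stationary(t_{0i},\lambda x.t_{1i}),N_i)$, one has $\sup_\gamma\norm{\denote{t}_\gamma-\denote{t'}_\gamma}_\tv\le\sum_{i\in I_t}C_i'\rho_i^{N_i}$, where $I_t\subseteq I$ indexes the stationary subterms of $t$. The theorem is the special case $t=P$ with empty context. Two ingredients drive the argument. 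First, total variation is subadditive under kernel composition: for probability kernels $K,\tilde K$ and distributions $\mu,\tilde\mu$, $\norm{\int K(x,\cdot)\mu(dx)-\int\tilde K(x,\cdot)\tilde\mu(dx)}_\tv\le\norm{\mu-\tilde\mu}_\tv+\sup_x\norm{K(x,\cdot)-\tilde K(x,\cdot)}_\tv$ (split via an intermediate term and use the data-processing contraction of $\tv$). Second, a sensitivity bound for uniformly ergodic chains, discussed below.

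The non-stationary cases are routine. Since $\langS$ has no $\Score$, every term denotes a probability kernel, so the composition bound applies verbatim. For $\Sample(a)$ and $\Return(a)$ the arguments are deterministic and contain no stationary subterms, so $t'=t$ and the error is $0$. For $\Let x=t_0\In t_1$, the semantics is the composition $\int\denote{t_1}_{\gamma,x}\denote{t_0}_{\gamma,dx}$, so the composition bound together with the inductive hypotheses for $t_0$ and $t_1$ bounds the error by $\sum_{i\in I_{t_0}}C_i'\rho_i^{N_i}+\sum_{i\in I_{t_1}}C_i'\rho_i^{N_i}=\sum_{i\in I_t}C_i'\rho_i^{N_i}$, using $I_t=I_{t_0}\uplus I_{t_1}$. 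For $\Case a\Of\set{(i,x)\Rightarrow t_i}$, the denotation selects a single branch determined by $\denote{a}_\gamma$, so the supremum over $\gamma$ bounds the error by $\max_j\sup_{\gamma}\norm{\denote{t_j}_\gamma-\denote{t_j'}_\gamma}_\tv$, which the inductive hypotheses bound by $\sum_{i\in I_t}C_i'\rho_i^{N_i}$.

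The heart of the argument is the $\Stationary$ case. Fix such a term with index $i$, exact kernel $K=\denote{t_{1i}}_\gamma$, exact initial law $\mu=\denote{t_{0i}}_\gamma$, and stationary distribution $\pi_{i,\gamma}=\denote{\Stationary(t_{0i},\lambda x.t_{1i})}_\gamma$; the hypothesis $R(K,\pi_{i,\gamma},C_i,\rho_i)$ supplies uniform ergodicity. The approximate denotation is $\tilde\mu\tilde K^{N_i}$ with $\tilde\mu=\denote{t_{0i}'}_\gamma$ and $\tilde K=\denote{t_{1i}'}_\gamma$ the perturbed initial law and kernel (under $R$ the exact term places all its mass on the $(0,\cdot)$ component, so the comparison lives on $\TypeA$ and no $(1,())$ term arises). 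By the triangle inequality,
\[
\norm{\pi_{i,\gamma}-\tilde\mu\tilde K^{N_i}}_\tv\le\norm{\pi_{i,\gamma}-\mu K^{N_i}}_\tv+\norm{\mu K^{N_i}-\tilde\mu\tilde K^{N_i}}_\tv.
\]
The first term is at most $C_i\rho_i^{N_i}$ by integrating the ergodicity bound supplied by $R$ against $\mu$. For the second, I would invoke the Mitrophanov-type sensitivity bound of \citet{MRS:19}: writing $\veps_i=\sup_x\norm{K(x,\cdot)-\tilde K(x,\cdot)}_\tv$, uniform ergodicity of $K$ yields a constant $B_{C_i,\rho_i}$, independent of $N_i$, with $\norm{\mu K^{N_i}-\tilde\mu\tilde K^{N_i}}_\tv\le\rho_i^{N_i}\norm{\mu-\tilde\mu}_\tv+B_{C_i,\rho_i}\veps_i$. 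By the inductive hypotheses, $\norm{\mu-\tilde\mu}_\tv\le\sum_{j\in I_{t_{0i}}}C_j'\rho_j^{N_j}$ and $\veps_i\le\sum_{j\in I_{t_{1i}}}C_j'\rho_j^{N_j}$ (the latter uniformly in $x$, since $R$ is assumed for all contexts, hence all values of the enclosing bound variable). Collecting terms and using $\rho_i^{N_i}\le1$, the total is at most $C_i\rho_i^{N_i}+(1+B_{C_i,\rho_i})\sum_{j\in I_{t_{0i}}\cup I_{t_{1i}}}C_j'\rho_j^{N_j}$, which has the required form: term $i$ contributes $C_i'=C_i$, while each nested term's constant is amplified by the finite factor $1+B_{C_i,\rho_i}$. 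Since $P$ is finite, each $C_i'$ is a finite product of such factors over the stationary terms enclosing $i$, completing the induction.

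The main obstacle is the sensitivity bound for the perturbed chain. The naive estimate $\norm{\mu K^{N}-\tilde\mu\tilde K^{N}}_\tv\le\norm{\mu-\tilde\mu}_\tv+N\veps$ grows linearly in $N$ and would swamp the decaying convergence term; one genuinely needs the $N$-independent constant $B_{C_i,\rho_i}$, and this is exactly where uniform (rather than mere) ergodicity becomes indispensable. This is the formal counterpart of \cref{problem:1}: without uniform ergodicity the perturbed kernel $\tilde K$ may fail to be ergodic at all, and no such bound can hold. The remaining care is purely bookkeeping of constants under nesting---verifying that the amplification factors compound multiplicatively with nesting depth and so remain finite---which the finiteness of $P$ guarantees.
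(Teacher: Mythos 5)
Your proposal is correct and follows essentially the same route as the paper: structural induction over \uprobabilistic{} terms, with total-variation composition/selection bounds for the $\Let$ and $\Case$ constructs (the paper's \cref{prop:let-case}), and a triangle-inequality split of the $\Stationary$ case into a convergence term $C_i\rho_i^{N_i}$ plus a kernel-perturbation term whose constant is independent of $N_i$ (the paper's \cref{thm:approxstat}). The only difference is that you import that perturbation bound as a black-box Mitrophanov-type result from \citet{MRS:19}, whereas the paper proves it directly in \cref{lemma:iter} via a telescoping decomposition combined with the contraction estimate of \cref{lem:2}, yielding the explicit constant $\veps C/(1-\rho)$.
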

In the remainder of this section we prove the result above. We begin by first establishing the uniform continuity of $\Let$ and $\Case $ constructs. 
\begin{proposition}
  The following statements hold:
  \begin{enumerate}
    \item Let $\Gamma |-1 t_0:\TypeA$, $\Gamma |-1 t_0':\TypeA$, $\Gamma, x:\TypeA |-1 t_1:\TypeB$, and $\Gamma , x:\TypeA |-1 t_1':\TypeB$ be \uprobabilistic{} terms. If for all $\gamma\in \Gamma$, $\norm{\denote{t_0'}_\gamma -\denote{t_0}_\gamma}_{\tv} \leq \alpha$ and for all $\denote{x}_\gamma\in \denote{\TypeA}$,  $$\norm{\denote{t_1'}_\gamma(\denote{x}_{\gamma}) -\denote{t_1}_\gamma(\denote{x}_\gamma)}_{\tv} \leq \beta$$ then $$\norm{\denote{\Let x = t_0 \In t_1}_\gamma - \denote{\Let x = t_0' \In t_1'}_\gamma}_{\tv}\leq \alpha + \beta$$
    
    \item Let $\Gamma, x:\TypeA_i |-1 t_i:\TypeB$ and $\Gamma, x:\TypeA_i |-1 t'_i:\TypeB$, such that $$\forall i\in I, \forall x\in \denote{\TypeA_i},  \norm{\denote{t'_{i}}_{\gamma,x} -\denote{t_i}_{\gamma,x}}_\tv \leq \alpha_i$$      
  \end{enumerate}

  \begin{equation*}
    \left\lVert
    \begin{aligned}
      &\denote{\Case a \In \set{(i,x) \Rightarrow t'_i}_{i\in I}}_\gamma - \\
      &\qquad \denote{\Case a \In \set{(i,x) \Rightarrow t_i}_{i\in I}}_\gamma  
    \end{aligned}
      \right\rVert_{\mathrm{tv}}
      \leq \sup \set{\alpha_i}_{i\in I}
  \end{equation*}

  \label{prop:let-case}
\end{proposition}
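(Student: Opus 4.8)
The plan is to work entirely at the level of the denotational semantics given in \cref{sec:semantics}: the kernel-composition (integration) formula for $\Let$ and the deterministic branch-selection formula for $\Case$. For each construct I would reduce the claim to a triangle-inequality argument in $\norm{\cdot}_\tv$ combined with the two standard descriptions of total variation distance, namely its set-based definition and its variational form $\norm{\mu-\nu}_\tv = \sup_{0\le f\le 1}\abs{\int f\,d\mu - \int f\,d\nu}$ for probability measures $\mu,\nu$.

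For the $\Let$ statement I would start from $\denote{\Let x = t_0 \In t_1}_{\gamma, A} = \int_{\denote{\TypeA}} \denote{t_1}_{\gamma, x, A}\, \denote{t_0}_{\gamma, dx}$, so that the two sides are the kernel compositions of $\denote{t_0}_\gamma,\denote{t_1}_\gamma$ and of $\denote{t_0'}_\gamma,\denote{t_1'}_\gamma$. I would insert the mixed composition $\int_{\denote{\TypeA}}\denote{t_1'}_{\gamma,x,\cdot}\,\denote{t_0}_{\gamma,dx}$ and split by the triangle inequality. In the first piece the transition kernels differ while the initial distribution $\denote{t_0}_\gamma$ is held fixed; for each measurable $B$ I would move the absolute value inside the integral, bound the integrand by $\abs{\denote{t_1}_{\gamma,x,B} - \denote{t_1'}_{\gamma,x,B}}\le \norm{\denote{t_1}_{\gamma,x} - \denote{t_1'}_{\gamma,x}}_\tv \le \beta$, and use that $\denote{t_0}_\gamma$ is a probability measure, so integrating the constant $\beta$ leaves $\beta$; taking the supremum over $B$ gives the bound $\beta$. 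In the second piece the transition kernel $\denote{t_1'}_\gamma$ is fixed but the initial distributions differ; here I would apply the variational characterization to the test function $x\mapsto \denote{t_1'}_{\gamma,x,B}$, which is measurable by the kernel property and $[0,1]$-valued because $t_1'$ is \uprobabilistic{}, bounding this piece by $\norm{\denote{t_0}_\gamma - \denote{t_0'}_\gamma}_\tv\le\alpha$. Summing the two pieces yields $\alpha+\beta$.

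The $\Case$ statement is much simpler because its scrutinee $a$ is deterministic. Evaluating $\denote{a}_\gamma = (i,d)$ for some fixed index $i\in I$ and value $d\in\denote{\TypeA_i}$, the semantics collapses both case expressions to their $i$-th branches, so the left-hand side equals $\norm{\denote{t_i'}_{\gamma,d} - \denote{t_i}_{\gamma,d}}_\tv$. By hypothesis this is at most $\alpha_i$, hence at most $\sup_{j\in I}\alpha_j$, which is exactly the claimed bound.

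The only genuinely delicate step is the second piece of the $\Let$ bound, where one passes from the set-based definition of total variation to its variational form over $[0,1]$-valued test functions. The two observations that make this legitimate are that $\denote{t_1'}_{\gamma,\cdot,B}$ is measurable and $[0,1]$-valued precisely because $t_1'$ is a \uprobabilistic{} (probability-kernel) term, and that $\denote{t_0}_\gamma$ and $\denote{t_0'}_\gamma$ are both probability measures, so the positive and negative parts of their difference carry equal mass and the variational supremum is attained at the indicator of the Hahn positive set. Everything else is a routine triangle inequality and an interchange of supremum and integral justified by the pointwise bound $\beta$.
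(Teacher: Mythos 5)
Your proof is correct and follows essentially the same route as the paper's: the same triangle-inequality decomposition through the mixed composition $\int \denote{t_1'}_{\gamma,x,\cdot}\,\denote{t_0}_{\gamma,dx}$, the same pointwise bound by $\beta$ for the changed-kernel piece, the same variational characterization over $[0,1]$-valued test functions for the changed-initial-distribution piece, and the same branch-collapse argument for $\Case$. Your remarks justifying the passage to the variational form are a welcome elaboration of a step the paper leaves implicit.
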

\begin{proof}
Proof in \cref{appendix:let-case}
\end{proof}

The next theorem quantifies the error associated with the $N$ step $\Iterate$ transformation of the $\Stationary$ term with an \emph{approximate} transition kernel to the $\Stationary$ term with the \emph{correct} transition kernel. 
\begin{theorem}
  \label{thm:approxstat}
  Let $\Gamma , x |-1 t_1$ and $\Gamma , x |-1 t_1'$ be probabilistic terms. If there exist $\pi$, $C$, and  $\rho$ such that $R(\denote{t_1}_\gamma, \pi, C, \rho)$ holds and \[\norm{\denote{t_1'}_\gamma - \denote{t_1}_\gamma}_\tv \leq \veps,\] then
  \[
  \norm{\denote{\Iterate^N(t_0, \lambda x. t_1')} - \denote{\Stationary(t_0, \lambda x. t_1)}}_\tv \leq \frac{\veps C}{1-\rho} + C\rho^N.\]
\end{theorem}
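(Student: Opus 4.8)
The plan is to reduce the claim to a perturbation estimate for the iterates of a uniformly ergodic kernel. Write $P \defeq \denote{t_1}_\gamma$ for the true transition kernel, $\pert P \defeq \denote{t_1'}_\gamma$ for its approximation, and $\mu_0 \defeq \denote{t_0}_\gamma$ for the initial distribution. By \cref{assump:erg} and the semantics of $\Stationary$, the term $\denote{\Stationary(t_0,\lambda x. t_1)}_\gamma$ restricts to the stationary law $\pi$ on $\denote{\TypeA}$, while $\denote{\Iterate^N(t_0,\lambda x. t_1')}_\gamma = \int \pert P^N(x,\cdot)\,\mu_0(dx)$, which I abbreviate $\mu_0\pert P^N$. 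So the quantity to bound is $\norm{\mu_0\pert P^N - \pi}_\tv$.

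First I would split the error with the triangle inequality,
\[
\norm{\mu_0\pert P^N - \pi}_\tv \le \norm{\mu_0\pert P^N - \mu_0 P^N}_\tv + \norm{\mu_0 P^N - \pi}_\tv .
\]
The second term is the genuine convergence error and is handled directly by the relation $R(\denote{t_1}_\gamma,\pi,C,\rho)$: integrating the pointwise bound $\norm{P^N(x,\cdot)-\pi}_\tv \le C\rho^N$ against $\mu_0$ gives $\norm{\mu_0 P^N - \pi}_\tv \le C\rho^N$, which is exactly the second summand of the claimed bound.

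The heart of the argument is the first term, the accumulated approximation error. Here I would telescope the difference of the two $N$-fold compositions along the identity
\[
\pert P^N - P^N = \sum_{k=0}^{N-1} \pert P^{k}\,(\pert P - P)\,P^{N-1-k},
\]
so that, applying $\mu_0$ on the left and the triangle inequality,
\[
\norm{\mu_0(\pert P^N - P^N)}_\tv \le \sum_{k=0}^{N-1}\norm{\eta_k\,P^{N-1-k}}_\tv, \qquad \eta_k \defeq \nu_k(\pert P - P),
\]
where $\nu_k \defeq \mu_0\pert P^{k}$ is a probability measure. For a fixed $k$ the one-step discrepancy $\eta_k$ is a signed measure of total mass zero, and $\norm{\eta_k}_\tv \le \int \norm{\pert P(x,\cdot) - P(x,\cdot)}_\tv\,\nu_k(dx) \le \veps$ by hypothesis. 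The remaining factor $P^{N-1-k}$ is where uniform ergodicity must be used: applying a uniformly ergodic kernel to a zero-mass signed measure contracts it geometrically. Concretely, decomposing $\eta_k$ into its Hahn--Jordan parts and normalizing yields two probability measures whose $P^{m}$-iterates are each within $C\rho^{m}$ of $\pi$, giving $\norm{\eta_k P^{m}}_\tv \le C\rho^{m}\norm{\eta_k}_\tv$ with $m = N-1-k$. Hence each summand is at most $\veps\,C\rho^{N-1-k}$, and the geometric sum $\sum_{k=0}^{N-1}\veps\,C\rho^{N-1-k} = \veps\,C\sum_{j=0}^{N-1}\rho^{j} \le \frac{\veps C}{1-\rho}$ produces the first summand of the claim; combining the two pieces finishes the proof.

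The step I expect to be the main obstacle is the contraction estimate $\norm{\eta_k P^{m}}_\tv \le C\rho^{m}\norm{\eta_k}_\tv$ for zero-mass signed measures. This is exactly where the hypothesis must be uniform ergodicity rather than mere ergodicity: \cref{assump:erg} alone gives only asymptotic, not geometric, decay, which would not sum to a finite bound. Some care is also needed in tracking the constant through the Hahn--Jordan decomposition, since the naive route through $\pi$ by the triangle inequality costs a factor that must be controlled so as to land the geometric series at $\frac{\veps C}{1-\rho}$ rather than a looser multiple. The only other thing to verify is the bookkeeping identifying $\denote{\Iterate^N(t_0,\lambda x.t_1')}_\gamma$ and $\denote{\Stationary(t_0,\lambda x.t_1)}_\gamma$ with $\mu_0\pert P^N$ and $\pi$, which is immediate from the semantics in \cref{sec:semantics} together with \cref{assump:erg}.
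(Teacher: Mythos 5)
Your overall route is precisely the paper's own: the triangle-inequality split into a perturbation term plus a convergence term is exactly how the paper proves \cref{thm:approxstat}, your telescoping of $\pert P^N - P^N$ is the content of the paper's \cref{lemma:iter}, and your contraction estimate for mass-zero signed measures is the paper's contraction lemma, \cref{lem:2} (note that each $\eta_k = \nu_k\pert P - \nu_k P$ is a difference of two probability measures, which is the form in which the paper states it). The steps you actually carry out are correct and match the paper: integrating the uniform bound to get $\norm{\mu_0 P^N - \pi}_\tv \le C\rho^N$, the telescoping identity, the bound $\norm{\eta_k}_\tv \le \veps$, and the geometric summation.

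However, the step you flag as the ``main obstacle'' is a genuine gap, and it cannot be closed as stated. Your Hahn--Jordan argument yields only $\norm{\eta_k P^m}_\tv \le 2C\rho^m \norm{\eta_k}_\tv$: writing $\eta_k = c(\tilde\eta^{+} - \tilde\eta^{-})$ with $c = \norm{\eta_k}_\tv$ and $\tilde\eta^{\pm}$ probability measures, each of $\tilde\eta^{\pm}P^m$ is within $C\rho^m$ of $\pi$, and the two errors add. Moreover, no sharper argument can exist, because the constant-$C$ contraction is false in general: for the two-state kernel $P = \begin{pmatrix}0.9 & 0.1\\ 0.1 & 0.9\end{pmatrix}$ with $\pi = (\tfrac12,\tfrac12)$ one has $\norm{P^N(x,\cdot)-\pi}_\tv = \tfrac12(0.8)^N$, so $R(P,\pi,\tfrac12,0.8)$ holds, yet $\norm{\delta_1 P^N - \delta_2 P^N}_\tv = (0.8)^N = 2C\rho^N\norm{\delta_1 - \delta_2}_\tv$. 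In fact the theorem's stated constant is itself unattainable: perturbing to $\pert P = \begin{pmatrix}0.9-\veps & 0.1+\veps\\ 0.1-\veps & 0.9+\veps\end{pmatrix}$ gives $\sup_x \norm{\pert P(x,\cdot)-P(x,\cdot)}_\tv = \veps$, while $\mu_0\pert P^N$ converges to the stationary law of $\pert P$, which lies at total-variation distance $5\veps$ from $\pi$, strictly exceeding $\lim_{N\to\infty}\bigl(\tfrac{\veps C}{1-\rho}+C\rho^N\bigr) = \tfrac52\veps$. So the factor of $2$ you hoped to avoid is unavoidable. You should know that the paper's own proof hides exactly the same flaw: \cref{lem:2} asserts the constant-$C$ contraction and dismisses its proof as ``linearity of integration.'' The conclusions genuinely reachable by your (and the paper's) argument are
\[
\norm{\denote{\Iterate^N(t_0,\lambda x.\,t_1')} - \denote{\Stationary(t_0,\lambda x.\,t_1)}}_\tv \le \frac{2\veps C}{1-\rho} + C\rho^N,
\]
or the stated bound under a strengthened relation $R$ that directly postulates the Dobrushin-type condition $\sup_{x,y}\norm{P^N(x,\cdot)-P^N(y,\cdot)}_\tv \le C\rho^N$.
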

To prove the theorem above, we begin by first giving a simple contraction lemma that quantifies the total variation distance between the laws of the $n^{\text{th}}$ random variable of Markov chains with different initial distribution but same transition kernel under the assumption of uniform ergodicity. 
\begin{lemma}[Contraction]
  \label{lem:2}
  Let $\Gamma |-1 t_0:\TypeA$ and $\Gamma, x:\TypeA |-1 t_1:\TypeA$ be \uprobabilistic{} terms. If these terms are such that  $R\left(\denote{t_1}_\gamma, (\denote{\Stationary(t_0,\lambda x.t_1)}_{\gamma}), C, \rho\right)$ holds, then for all 
  \begin{equation*}
    \left\lVert 
    \begin{aligned}
      &\denote{\Iterate^N(m_1, \lambda x. t_1)}_\gamma - \\
      &\qquad\denote{\Iterate^N(m_2, \lambda x. t_1)}_\gamma
    \end{aligned}\right\rVert_{\mathrm{tv}} \leq C\rho^N\norm{\denote{m_1}_\gamma - \denote{m_2}_\gamma}.    
  \end{equation*}
\end{lemma}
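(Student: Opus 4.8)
The plan is to unfold the semantics of $\Iterate$ and reduce the claim to a statement about pushing two initial measures through a common $N$-step transition kernel, then exploit uniform ergodicity pointwise in the starting state. Write $k \defeq \denote{t_1}_\gamma$ for the transition kernel, $\mu_i \defeq \denote{m_i}_\gamma$ for the two initial probability measures, and $\nu \defeq \mu_1 - \mu_2$ for their (signed) difference. The defining equation for $\Iterate$ gives, for every measurable $A$,
\[
\denote{\Iterate^N(m_i,\lambda x.t_1)}_{\gamma}(A) = \int k^N(x,A)\,\mu_i(dx), \qquad i\in\{1,2\}.
\]
Let $\pi$ be the stationary law identified by the hypothesis $R(\denote{t_1}_\gamma,\denote{\Stationary(t_0,\lambda x.t_1)}_\gamma,C,\rho)$; by the definition of $R$, this $\pi$ satisfies the uniform ergodicity bound $\sup_{x}\norm{k^N(x,\cdot)-\pi}_\tv \le C\rho^N$.

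The key algebraic step is to insert the invariant law $\pi$ into the integrand. Because $\mu_1$ and $\mu_2$ are both probability measures, $\nu$ has total mass zero, so for any fixed set $A$ the constant $\pi(A)$ integrates to zero against $\nu$. Hence the difference of the two $N$-step laws on $A$ equals
\[
\int k^N(x,A)\,\nu(dx) = \int \left(k^N(x,A)-\pi(A)\right)\nu(dx),
\]
and now the integrand is controlled uniformly in the starting state: for every $x$ we have $\abs{k^N(x,A)-\pi(A)} \le \norm{k^N(x,\cdot)-\pi}_\tv \le C\rho^N$. Taking absolute values, dominating the integrand by this uniform bound, and integrating against the total variation measure $\abs{\nu}$ yields
\[
\abs{\int \left(k^N(x,A)-\pi(A)\right)\nu(dx)} \le \int \abs{k^N(x,A)-\pi(A)}\,\abs{\nu}(dx) \le C\rho^N\,\abs{\nu}(\denote{\TypeA}).
\]
Since this estimate holds uniformly in $A$, taking the supremum over measurable sets gives the stated bound, with the right-hand norm $\norm{\denote{m_1}_\gamma-\denote{m_2}_\gamma}$ read as the total variation norm $\abs{\nu}(\denote{\TypeA})$.

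I expect the argument to require no coupling, spectral, or minorization input beyond the uniform ergodicity already packaged into $R$; it is essentially a one-line domination once $\pi$ is inserted. The only genuine subtleties are bookkeeping ones. First, one must be careful about the two total variation conventions in play: the sup-over-sets distance $\norm{\cdot}_\tv$ used on the left and the total-mass norm $\abs{\nu}(\denote{\TypeA})$ that appears naturally on the right, which differ by a factor of two; reconciling them (or absorbing the factor into the constant) is where the precise form of the constant on the right-hand side is pinned down. Second, one should confirm that the measure $\pi$ supplied by $R$ is genuinely the invariant law against which the geometric decay is stated — this is exactly what membership in $R$, together with \cref{assump:erg}, guarantees, modulo the harmless identification of $\denote{\Stationary(t_0,\lambda x.t_1)}_\gamma$ (a measure on $\denote{\TypeA+1}$) with its restriction $\pi$ on $\denote{\TypeA}$ in the ergodic case.
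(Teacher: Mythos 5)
Your proof is correct, and it is substantially more complete than the paper's own, which disposes of the lemma with the single sentence that it ``follows directly from linearity of integration.'' Linearity alone cannot be the whole story: pushing the signed measure $\nu = \denote{m_1}_\gamma - \denote{m_2}_\gamma$ through $k^N$ and bounding crudely via $k^N(x,A)\le 1$ only yields $\abs{\nu}(\denote{\TypeA})$, with no geometric factor, and indeed the contraction fails for general (non-ergodic) kernels. The decay $C\rho^N$ requires exactly the two ingredients you supply: (i) $\nu$ has total mass zero, so the constant $\pi(A)$ can be inserted into the integrand at no cost, and (ii) the uniform-ergodicity bound packaged in $R$ controls $\abs{k^N(x,A)-\pi(A)}$ uniformly in the starting state $x$. (Equivalently, one can run a Jordan-decomposition argument, writing $\nu = \nu^{+} - \nu^{-}$ with $\nu^{+}(\denote{\TypeA}) = \nu^{-}(\denote{\TypeA})$, normalizing each piece to a probability measure, and comparing each $N$-step law to $\pi$; this yields the same constant, so your $\pi$-insertion is just the cleaner packaging of the same idea.) Your two bookkeeping caveats are also apt: the paper writes the right-hand norm with no subscript, so the factor of two between the sup-over-sets distance and the total-mass norm $\abs{\nu}(\denote{\TypeA})$ is genuinely ambiguous in the statement itself, and the identification of $\denote{\Stationary(t_0,\lambda x. t_1)}_\gamma$, a measure on $\denote{\TypeA + 1}$, with its restriction $\pi$ to $\denote{\TypeA}$ is needed for the hypothesis $R$ to even type-check. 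In short: your argument is right, and it fills a real gap rather than merely expanding a routine computation.
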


\begin{proof}
  The proof of this lemma follows directly from linearity of integration.
\end{proof}

To prove \cref{thm:approxstat}, we now give following theorem that quantifies the distance between the semantics of the $\Iterate$ transformation when the transition kernels are close.
\begin{lemma}
  \label{lemma:iter}
  Let $\Gamma , x |-1 t_1$ and $\Gamma , x |-1 t_1'$ be probabilistic terms. If there exist $\pi \in \calM(\TypeA)$, $C\in \NNReals$, and $\rho\in [0,1)$ such that $R(\denote{t_1}_\gamma, \pi, C, \rho)$ and $$\norm{\denote{t_1'}_\gamma - \denote{t_1}_\gamma}_\tv \leq \veps$$ then, 
\[
  \norm{\denote{\Iterate^N(t_0, \lambda x. t_1')} - \denote{\Iterate^N(t_0, \lambda x. t_1)}}_\tv \leq \frac{\veps C}{1-\rho},
\]
\end{lemma}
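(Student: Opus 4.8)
The plan is to translate the statement into the language of Markov kernels and prove a standard telescoping perturbation bound, with \cref{lem:2} supplying the contraction at each step. Write $K = \denote{t_1}_\gamma$ and $K' = \denote{t_1'}_\gamma$ for the exact and approximate transition kernels, and $\mu = \denote{t_0}_\gamma$ for the common initial distribution. Unfolding the semantics of $\Iterate$ recorded above, the two objects to be compared are $\denote{\Iterate^N(t_0,\lambda x.t_1')}_\gamma = \mu (K')^N$ and $\denote{\Iterate^N(t_0,\lambda x.t_1)}_\gamma = \mu K^N$, where I abbreviate $\mu K^n(\cdot) := \int_{\denote{\TypeA}} K^n(x,\cdot)\,\mu(dx)$. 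So the goal reduces to bounding $\norm{\mu(K')^N - \mu K^N}_\tv$ by $\veps C/(1-\rho)$.

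First I would record a one-step perturbation estimate: for any probability measure $\alpha$ on $\denote{\TypeA}$, $\norm{\alpha K' - \alpha K}_\tv \le \veps$. This follows by pulling the supremum over measurable sets inside the integral against $\alpha$, since the hypothesis $\norm{\denote{t_1'}_\gamma - \denote{t_1}_\gamma}_\tv \le \veps$ is the uniform-in-state bound $\sup_x \norm{K'(x,\cdot) - K(x,\cdot)}_\tv \le \veps$. Next I would telescope the difference along the hybrid path that swaps copies of $K$ for $K'$ one at a time:
\[
\mu(K')^N - \mu K^N = \sum_{j=0}^{N-1} (\alpha_j K' - \alpha_j K)\, K^{N-1-j}, \qquad \alpha_j := \mu (K')^j .
\]
The crucial observation is that, for each $j$, both $\alpha_j K'$ and $\alpha_j K$ are genuine probability measures, so the $j$-th summand has the form $(m_1 - m_2)K^{N-1-j}$ with $m_1,m_2$ probability measures. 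This lets me invoke \cref{lem:2} directly, with $N-1-j$ iterations of the exact kernel $K$:
\[
\norm{(\alpha_j K')K^{N-1-j} - (\alpha_j K)K^{N-1-j}}_\tv \le C\rho^{\,N-1-j}\,\norm{\alpha_j K' - \alpha_j K}_\tv \le C\rho^{\,N-1-j}\,\veps ,
\]
the last inequality being the one-step estimate. Summing over $j$ and using $\sum_{k\ge 0}\rho^k = (1-\rho)^{-1}$ then yields $\norm{\mu(K')^N - \mu K^N}_\tv \le C\veps \sum_{j=0}^{N-1}\rho^{\,N-1-j} \le \veps C/(1-\rho)$, which is exactly the claimed bound.

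The main obstacle is arranging the telescoping so that \cref{lem:2} applies cleanly. The naive telescoping produces summands of the form $\alpha_j(K'-K)K^{N-1-j}$, in which $\alpha_j(K'-K)$ is only a \emph{zero-mass signed measure}; contracting it would require either a Jordan-decomposition argument or a version of \cref{lem:2} stated for signed measures. Rewriting $\alpha_j(K'-K) = \alpha_j K' - \alpha_j K$ as a difference of two honest probability measures sidesteps this entirely, since \cref{lem:2} is phrased in terms of two initial distributions. The only remaining point to check is that the hypothesis $R(\denote{t_1}_\gamma,\pi,C,\rho)$ is precisely the assumption under which \cref{lem:2} was established, so its contraction estimate is legitimately available for every power $K^{N-1-j}$ appearing in the sum.
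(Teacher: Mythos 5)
Your proof is correct and follows essentially the same route as the paper's: the same hybrid telescoping of $\mu(K')^N - \mu K^N$ into summands of the form $(\alpha_j K' - \alpha_j K)K^{N-1-j}$, the same application of the contraction lemma (\cref{lem:2}) to each summand viewed as two probability measures evolved under the exact kernel, the same one-step perturbation bound $\norm{\alpha_j K' - \alpha_j K}_\tv \le \veps$, and the same geometric-series summation. Your explicit remark about rewriting the zero-mass signed measure $\alpha_j(K'-K)$ as a difference of two probability measures is precisely the maneuver the paper performs (somewhat opaquely) when it rewrites each telescoping term via a $\Let$ binding before invoking the contraction lemma.
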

\begin{proof}
  We show this by first noting that 
  \begin{align*}  
    & \norm{\denote{\Iterate^N(t_0, \lambda x. t_1))}_\gamma - \denote{\Iterate^N(t_0, \lambda x. t_1')}_\gamma}_{\tv}   
  \end{align*}  

  \begin{equation*}
    = \left\lVert
    \begin{aligned}
    &\sum_{i=0}^{N-1}\denote{\Iterate^{N-i}(\Iterate^{i}(t_0, \lambda x. t_1'), \lambda x. t_1) }_\gamma \\
    &\qquad\quad - \denote{\Iterate^{N-i-1}(\Iterate^{i+1}(t_0, \lambda x. t_1'), \lambda x. t_1) }_\gamma
    \end{aligned} 
    \right\rVert_{\tv}
  \end{equation*}
  \begin{equation*}
    = \left\lVert
    \begin{aligned}
    &\sum_{i=0}^{N-1} \denote{\Iterate^{N-i-1}(\Let x = \Iterate^{i}(t_0, \lambda x. t_1') \In n, \lambda x. t_1))} \\
    &\quad \qquad- \denote{\Iterate^{N-i-1}(\Iterate^{i+1}(t_0, \lambda x. t_1'), \lambda x. t_1)}_\gamma
    \end{aligned} 
    \right\rVert_{\tv}
  \end{equation*}

  Applying the contraction lemma,
  \begin{align*}
    \leq & \sum_{i=0}^{N-1} C\rho^{N-i-1}\left(\left\lVert\denote{\Let x = \Iterate^{i}(t_0, \lambda x. t_1') \In n}_\gamma \right.\right. \\
    & \left.\left.\qquad\qquad\qquad\quad - \denote{\Let x = \Iterate^{i}(t_0, \lambda x. t_1') \In n'}_\gamma\right\rVert_\tv \right)\\
    \leq& \sum_{i=0}^{N-1} C\rho^{i} \veps \leq \frac{\veps C}{1-\rho}.
  \end{align*}
\end{proof}

\begin{proof}[Proof of \cref{thm:approxstat}]
Given the \cref{lemma:iter} and the assumption that $R(\denote{t_1}_\gamma, \pi, C, \rho)$ holds, the result follows by the triangle inequality.
\end{proof}

Now we prove the main theorem of this section.
\begin{proof}[Proof of \cref{thm:quantbound}]
  We proceed by induction on probabilistic terms.
    \begin{itemize}
    \item \emph{Base case:}  \\ 
      Leaf node for the induction is the terms of the form $\Stationary(t_0', \lambda x. t_1')$ .
      By the assumption of uniform ergodicity there exist $C', \rho'$ such that the following holds $$\denote{\Stationary(t_0, \lambda x. t_1)} - \denote{\Iterate^{N'}(t_0, \lambda x. t_1)}\leq C'\rho'^{N'}.$$ 
      Thus the base case holds. 
    \item \emph{Inductive Step:} \\
      For the inductive step we show the hypothesis holds for all constructor of probabilistic terms in our language. 
      \begin{itemize}
        \item $\Case$ terms: By the inductive hypothesis, $$\norm{\denote{t'_j}_\gamma - \denote{t'_j}_\gamma}_\tv \leq \sum_{i\in I_j}C_i\rho^{N_i}.$$
        Now, we show
        \begin{equation*}
          \left\lVert
          \begin{aligned}
            & \denote{\Case a \In \set{(j,x) \Rightarrow t'_j}_{j\in J}} - \\
            & \qquad \denote{\Case a \In \set{(j,x) \Rightarrow t_j}_{j\in J}}
          \end{aligned}\right\rVert_{\tv}  \leq \sum_{i\in \bigcup_{j\in J}I_j} C'_i \rho_i^{N_i}     
        \end{equation*}
      
        From \cref{prop:let-case} and the inductive hypothesis, we know

        \begin{align*}
          \left\lVert
          \begin{aligned}
            & \denote{\Case a \In \set{(j,x) \Rightarrow t'_j}_{j\in J}} - \\
            & \qquad - \denote{\Case a \In \set{(j,x) \Rightarrow t_j}_{j\in J}}
          \end{aligned}\right\rVert_{\tv}  &\leq \sup_{j\in J} \sum_{i\in I_j}C_i\rho^{N_i}\\
          &\leq \sum_{i\in \bigcup_j I_j}C_i\rho^{N_i}
        \end{align*}

        \item $\Let$term: We need to show $$\norm{\denote{\Let x=t_1 \In t_2} - \denote{\Let x=t'_1 \In t'_2}}_\tv \leq \sum_{i\in I_1\cup I_2}C_i \rho_i^{N_i}$$
        This follows from inductive hypothesis and \cref{prop:let-case}.
        \item $\Stationary$ term: From \cref{thm:approxstat} and inductive hypothesis it follows that 
        \begin{equation*}
          \left\lVert \begin{aligned}
            &\denote{\Stationary(t_0, \lambda x. t_1)} - \\
            &\qquad \denote{\Iterate^{N'}(t_0, \lambda x.t_1')}
          \end{aligned}
          \right\rVert_{\tv}\leq C'\rho'^{N_i}  + \sum_{i} \frac{C'C_i}{1-\rho'} \rho_i^{N_i}.
        \end{equation*}
      \end{itemize}
    \end{itemize}
\end{proof}

\section{Summary and Discussion}
MCMC algorithms are workhorses for approximate inference in probabilistic models. MCMC algorithms are popular because they give us asymptotic convergence guarantees and are commonly used as ``approximate'' compilers in probabilistic programming languages. In this paper we proposed a language construct $\Stationary$ that allows us give a formal description for such compilers. We then gave a simple compiler description that, at its core, implements an MCMC algorithm. 
Typically quantifying the rate at which Markov chain, with a given transition kernel, converges is an open problem and the one we do not attempt to solve in this paper. We make a semantic assumption that the user using our language provides us with a description of the Markov kernel that converges uniformly to the corresponding target distribution. 
We show that under this uniform convergence property, we can derive rates at which the approximate compiler converges to the original program. 

The assumption for uniform ergodicity is crucial for us to derive the quantitative bound. The main difficulty we found in relaxing the uniform ergodicity assumption is the fact that our language allows us to nest the $\Stationary$. We leave as open problem if we can relax the uniform ergodicity assumption. 

\bibliography{references}
\pagebreak
\clearpage
\appendix

\section[]{Measurability of $\Stationary$}
\label{appendix:statmeasurability}
The map $\denote{\Stationary(t_0,\lambda x.t_1)}\colon\denote{\Gamma}\to \calM(\denote{\TypeA})$ is measurable. We show this by constructing a test function $\phi_{n}$ as follows:
\begin{align*}
    \phi_{n}(x,x') = \norm{\denote{t_1}^{(n)}(x, \cdot) - \denote{t_1}^{(n)}(x', \cdot)}_\tv
\end{align*}
such that $\exists! \mu\in \calM(\denote{\TypeA}): \int_{\denote{\TypeA}} \denote{t_1}^{(n)}(x,\cdot)\to_n \mu $ for 
$ x$ a.e. $\denote{t_0}_\gamma$ if and only if $\int \denote{t_1}^{(n)}(x,\cdot)$ is Cauchy in the total variation metric and $\lim_{n\to \infty} \phi_n(x,x') = 0$  $(x,x')$ a.e. $\denote{t_0}_\gamma\otimes\denote{t_0}_\gamma$.

\section[]{Proof of \cref{lem:compilerprop}}
\label{appendix:compilerprop}
We need show that the syntactic manipulations $\compiler$ defined in \cref{fig:compilerdiscr} satisfy the following two properties: 
\begin{enumerate}
    \item $\compiler(t)$ is a $\langS$-program for all $\langN$-program $t$.
    \item $\compiler(\bbF(a_1, ..., a_n)) = \bbF(\compiler(a_1), ..., \compiler(a_n))$ for all constructors in $\langS$
 \end{enumerate}

Recall that a program in $\langN$ and $\langS$ is a \uprobabilistic{} term with no free variables. To verify the first property, we need to make sure that for every program $|-1 t:\TypeA$, $\compiler(t)$ is a also a \uprobabilistic{} term with no free variables. This is an easy fact to verify by inducting on \uprobabilistic{} terms. You can also witness this fact by noticing that every term in $\langS$ is \uprobabilistic{} and the compiler does not introduce any free variables. 

For the second property, first recall compiler does not modify any deterministic term and that the set of probabilistic constructs in $\langS$ are as follows: $$\calF = \set{\Sample, \Return, \Let, \Case, \Stationary}.$$ 
By examining the the compiler transformation for all of these terms in \cref{fig:compilerdiscr}, we notice that the compiler is homomorphic in all constructs of $\langS$.

\section[]{Proof of \cref{lem:priorabc}}
\label{appendix:priorabc}
  The proof that $\Prior(t)$ is a \uprobabilistic{} term can be witnessed by the fact that $\Prior$ transforms all $\Score$ statements to $\Return$ statements. 
  Since probabilistic terms are closed under composition, it is a simple exercise to show that $\Prior(t)$ is \uprobabilistic{}. 
  We now show that for all $\gamma$
  \[\denote{\Prior(t)}_{\gamma, A} = 0 \implies \denote{t}_{\gamma,A} = 0\]
  This is shown by induction on terms.
  For the base cases the only change is in the $\Score$-term where for some arbitrary $\gamma$,
  \begin{align*}
    &\denote{\Prior(\Score(t))}_{\gamma, A} = \denote{\Return(())}_{\gamma, A} = \begin{cases}
      1 \quad \text{if } ()\in A\\
      0 \quad \text{if } ()\not\in A 
    \end{cases} \\
    &\denote{\Score(t)}_{\gamma, A} = \begin{cases}
      \abs{\denote{t}_\gamma} \quad \text{if } ()\in A\\
      0 \quad \text{if } ()\not\in A.
    \end{cases}
  \end{align*}
Thus, if $\denote{\Prior(\Score(t))}_{\gamma, A} = 0$ implies $()\not\in A$ which means $\denote{\Score(t)}_{\gamma, A} = 0$. Hence we have shown $\denote{\Score(t)}_{\gamma}\ll\denote{\Prior(\Score(t)}_{\gamma}.$ This finishes the bases cases for the induction. 

For the inductive step, we need to show that $\Let$terms and $\Case$terms is absolutely continuous with respect to the prior transformation for $\Let$terms and $\Case$terms respectively.

First we prove the let statements. For $A\in \denote{\TypeA_1}$,
\begin{align*}
  &\denote{\Prior(\Let x=t_0 \In t_1)}_{\gamma,A} \\
  &= \denote{\Let x=\Prior(t_0) \In \Prior(t_1)}_{\gamma,A}\\
  &= \int_{A}\int_{\denote{\TypeA_0}} \denote{\Prior(t_1)}_{\gamma, x, dy}\denote{\Prior(t_0)}_{\gamma, dx}
\end{align*}

By IH, $\set{\denote{t_i}_{\gamma} \ll \denote{\Prior(t_i)}_{\gamma}}_{i\in \set{0,1}}$. 
\cref{rnthm} guarantees us that there exist measurable functions $f_0\colon\denote{\Gamma}\times \denote{\TypeA_0} \to [0,\infty]$, and $f_1\colon\denote{\Gamma}\times \denote{\TypeA_0} \times \denote{\TypeA_1} \to [0,\infty]$ such that 
\begin{align*}
    &\denote{t_0}_{\gamma, A} = \int_A f_0(\gamma, x)\denote{\Prior(t_0)}_{\gamma, dx}, \text{ and } \\ &\denote{t_1}_{\gamma,x, A} = \int_A f_1(\gamma,x, y)\denote{\Prior(t_1)}_{\gamma, dy}.
\end{align*}
Now, let $A\in \Sigma_{\denote{\TypeA}}$ be such that: 
\begin{align*}
  \int_{A}\int_{\denote{\TypeA_0}} \denote{\Prior(t_1)}_{\gamma, x, dy}\denote{\Prior(t_0)}_{\gamma, dx} = 0
\end{align*}
Also, we know 
\begin{align*}
  &\denote{\Let x=t_0 \In t_1}_{\gamma,A} \\
  &= \int_{A}\int_{\denote{\TypeA_0}}f_0(\gamma, x) f_1(\gamma, x, y)\denote{\Prior(t_1)}_{\gamma, x, dy}\denote{\Prior(t_0)}_{\gamma, dx}.
\end{align*}
Since the integration of any measurable function on a null set is 0; Thus the following statement holds:
\begin{align*}
\int_{A}\int_{\denote{\TypeA_0}}f_0(\gamma, x) f_1(\gamma, x, y)\denote{\Prior(t_1)}_{\gamma, x, dy}\denote{\Prior(t_0)}_{\gamma, dx} = 0.
\end{align*}

This concludes the proof for $\Let$statements.
Now we show for the $\Case$ statements:
\begin{align*}
  &\denote{\Prior(\Case a \Of \set{(i,x)\implies t_i})}_{\gamma,A} \\
  & =  \denote{\Case a \Of \set{(i,x)\implies \Prior(t_i)}}_{\gamma,A}\\
  & = \denote{\Prior(t_i)}_{\gamma,d,A} \quad \text{if } \denote{a}_\gamma = (i,d)
\end{align*}
We know, 
\begin{align*}
    &\denote{\Case a \Of \set{(i,x)\implies t_i}}_{\gamma,A} \\
    &=  \denote{t_i}_{\gamma,d,A} \quad \text{if } \denote{a}_\gamma = (i,d)
\end{align*}
But by IH, $\denote{t_i}_{\gamma,d} \ll \denote{\Prior(t_i)}_{\gamma,d}$.  Thus, 
\begin{align*}
&\denote{\Case a \Of \set{(i,x)\implies t_i}}_{\gamma} \\
&\ll \denote{\Prior(\Case a \Of \set{(i,x)\implies t_i})}_{\gamma},
\end{align*}
completing the proof.

\section[]{Proof of \cref{prop:let-case}}
\label{appendix:let-case}
  \begin{enumerate}
    \item For the $\Let$construct: 
    \begin{align*}
      &\norm{\int \denote{t_1}_{\gamma,x} \denote{t_0}_{\gamma,dx} - \int \denote{t_1'}_{\gamma,x} \denote{t_0'}_{\gamma,dx}}_{\tv}  \\
      \leq & \norm{\int \denote{t_1}_{\gamma,x} \denote{t_0}_{\gamma,dx}- \int \denote{t_1'}_{\gamma,x} \denote{t_0}_{\gamma,dx}}_{\tv} \\
      & \qquad\qquad + \norm{\int \denote{t_1'}_{\gamma,x} \denote{t_0}_{\gamma,dx} - \int \denote{t_1'}_{\gamma,x} \denote{t_0'}_{\gamma,dx}}_{\tv} \\
      = & \norm{\int \left(\denote{t_1}_{\gamma,x} -  \denote{t_1'}_{\gamma,x}\right) \denote{t_0}_{\gamma,dx}}_{\tv} \\ 
      & \quad + \sup_{A}\abs{\int \denote{t_1'}_\gamma(x, A) \denote{t_0}_{\gamma,dx} - \int \denote{t_1'}_\gamma(x, A) \denote{t_0'}_{\gamma,dx}}\\
      \leq & \int \sup_{x'}\norm{\denote{t_1}_\gamma(x') -  \denote{t_1'}_\gamma(x')}_{\tv}\denote{t_0}_{\gamma,dx} \\ 
      & \qquad\qquad + \sup_{f\leq 1}\abs{\int f(x) \denote{t_0}_{\gamma,dx} - \int f(x) \denote{t_0'}_{\gamma,dx}}\\
      =& \int \beta \denote{t_0}_{\gamma,dx} + \alpha \\
      = &\alpha + \beta
    \end{align*}
    \item For the $\Case$ construct:
    \begin{equation*}
      \left\lVert
      \begin{aligned}
        & \denote{\Case a \In \set{(i,x) \Rightarrow t'_i}_{i\in I}}_\gamma - \\
        &\quad\qquad \denote{\Case a \In \set{(i,x) \Rightarrow t_i}_{i\in I}}_\gamma
      \end{aligned}
      \right\rVert_{\mathrm{tv}}
    \end{equation*}
    \begin{align*}
      & = \norm{    
        \denote{t'_i}_{v, \gamma} - \denote{t_i}_{v,\gamma}}_\tv \qquad\text{if }(i,v)= \denote{a}_\gamma \\
        & \leq \alpha_i \qquad \qquad\qquad\qquad\quad\;\;\text{if }(i,v)= \denote{a}_\gamma \\
        & \leq \sup_i\set{\alpha_i}_{i\in I}
    \end{align*}
  \end{enumerate}
\end{document}